\definecolor{myurlcolor}{rgb}{0,0,0.4}
\definecolor{mycitecolor}{rgb}{0,0.5,0}
\definecolor{myrefcolor}{rgb}{0.5,0,0}
\newcommand*{\addFileDependency}[1]{
  \typeout{(#1)}
  \@addtofilelist{#1}
  \IfFileExists{#1}{}{\typeout{No file #1.}}
}
\newcommand*{\myexternaldocument}[1]{
    \externaldocument{#1}
    \addFileDependency{#1.tex}
    \addFileDependency{#1.aux}
}
\newcommand{\beq}[0]{\begin{equation}}
\newcommand{\eeq}[0]{\end{equation}}
\newcommand{\one}{\leavevmode\hbox{\small1\normalsize\kern-.33em1}}
\def\be{\begin{equation}}
\def\ee{\end{equation}}
\def\ben{\begin{eqnarray}}
\def\een{\end{eqnarray}}
\def\eea{\end{array}}
\def\bea{\begin{array}}
\newcommand{\Tr}[1]{\mathrm{Tr}#1}
\newcommand{\bei}{\begin{itemize}}
\newcommand{\eei}{\end{itemize}}
\newcommand{\ket}[1]{|#1\rangle}
\newcommand{\bra}[1]{\langle#1|}
\newcommand{\proj}[1]{\ket{#1}\!\!\bra{#1}}
\newcommand{\I}{\mathbbm{1}}
\renewcommand{\emph}[1]{\textbf{#1}}
\newtheorem*{rep@theorem}{\rep@title}
\newcommand{\newreptheorem}[2]{%
\newenvironment{rep#1}[1]{%
 \def\rep@title{#2 \ref{##1}}%
 \begin{rep@theorem}}%
 {\end{rep@theorem}}}
\theoremstyle{plain}
\newtheorem{thm}{Theorem}
\newtheorem*{thm*}{Theorem}
\newtheorem{fakt}{Fact}
\theoremstyle{definition}
\theoremstyle{remark}
\begin{document}

\title{Model-independent inference of quantum interaction from statistics}
\author{Shubhayan Sarkar}
\email{shubhayan.sarkar@ulb.be}
\affiliation{Laboratoire d’Information Quantique, Université libre de Bruxelles (ULB), Av. F. D. Roosevelt 50, 1050 Bruxelles, Belgium}

\begin{abstract}	
Any physical theory aims to establish the relationship between physical systems in terms of the interaction between these systems. However, any known approach in the literature to infer this interaction is dependent on the particular modelling of the physical systems involved. Here, we propose an alternative approach where one does not need to model the systems involved but only assume that these systems behave according to quantum theory. We first propose a setup to infer a particular entangling quantum interaction between two systems from the statistics. For our purpose, we utilise the framework of Bell inequalities. We then extend this setup where an arbitrary number of quantum systems interact via some entangling interaction.
\end{abstract}


\maketitle

{\textit{Introduction---}} In physics, the exploration of interactions between two systems constitutes the foundational aspect of understanding natural phenomena. Despite significant advancements in the field, the analyses conducted thus far remain inherently model-dependent. A notable example of this lies in quantum field theories, where one hypothesizes different fields to explain experimental observations. While these models have proven immensely powerful in explaining a wide array of physical phenomena, their reliance on specific theoretical constructs underscores the necessity for continued refinement and exploration. Furthermore, these models are highly dependent on the parameters and the assumptions made on the experimental setup.



Recently, the idea of device-independent (DI) certification of quantum states and measurements has gained a lot of interest as they allow one to certify the properties of an unknown quantum device by only observing the statistical data it generates and making minimal assumptions about the device. The essential resource for any DI scheme is Bell nonlocality \cite{Bell,Bell66,NonlocalityReview}. For instance, any violation of a Bell inequality is a DI certification of the presence of entanglement inside the device.

The strongest form of DI certification is termed self-testing \cite{Mayers_selftesting, Yao}, enabling near-complete characterization of the underlying quantum state and its associated measurements by only assuming that the devices behave according to quantum theory.
%
%
Consequently, a wide range of schemes has been proposed to self-test pure entangled quantum states and projective quantum measurements (see, e.g., Refs. \cite{Scarani,Reichardt_nature,Mckague_2014,Wu_2014,Bamps,All,chainedBell, Projection,Jed1,prakash,Armin1,Flavio, sarkar,sarkaro2,Marco, Allst,sarkar2023universal}) as well as mixed entangled states  \cite{sarkar2023self, sarkar2023universal} and non-projective measurements \cite{remik1,sarkar2023universal}. Furthermore, schemes to self-test single unitaries \cite{Dall} and the controlled-not gate have been proposed in \cite{pavel}. Despite this progress, no scheme has been proposed that can be used to certify the interaction between two unknown systems. 

Inspired by self-testing, in this work, we propose a model-independent approach to infer the quantum interaction between two systems from the statistics generated in the experiment. We do not delve into the physical considerations of the experimental setup but rather focus on the operational nature of the scheme, that is, we do not care about the degree of freedom in which the interaction between the systems takes place. However, the concerned degree of freedom is the one that is being measured by the detectors. We particularly focus on entangling quantum interactions, that is, quantum interactions that can generate an entangled state from a product state. 

Recently, a lot of attention has been devoted to such interactions as they can be a certificate to probe the quantum nature of gravity \cite{Vedral,Bose}. Such entangling interactions have also been explored in quantum electrodynamics, for instance, two electrons that dynamically scatter get entangled either in spin or momentum degrees of freedom  \cite{eeint1,eeint2,eeint3,eeint4,eeint5,eeint6,eeint7,eeint8,eeint9}. Furthermore, such entangling interactions have also been explored in the quark-quark system  \cite{LHC1, LHC2, LHC3, LHC4} and has been recently observed at the Large Hadron Collider \cite{LHC}. We first propose a scenario that can be used to infer that two systems are interacting via a particular entangling quantum interaction that can generate maximally entangled states from product ones. For our purpose, we use the Bell inequalities suggested in \cite{Marco, sarkar2023universal}. Then, we generalize this result where an arbitrary number of systems interact with each other. Again, we utilize the Bell inequalities suggested in \cite{sarkar2023universal} to certify a particular entangling quantum interaction that can create Greenberger-Horne-Zeilinger(GHZ)-like states from product states.

{\textit{Two-system interaction---}} Let us begin by describing the setup to infer the entangling quantum interaction in a model-independent way.
\begin{figure*}[t!]
    \centering
    \includegraphics[scale=.4]{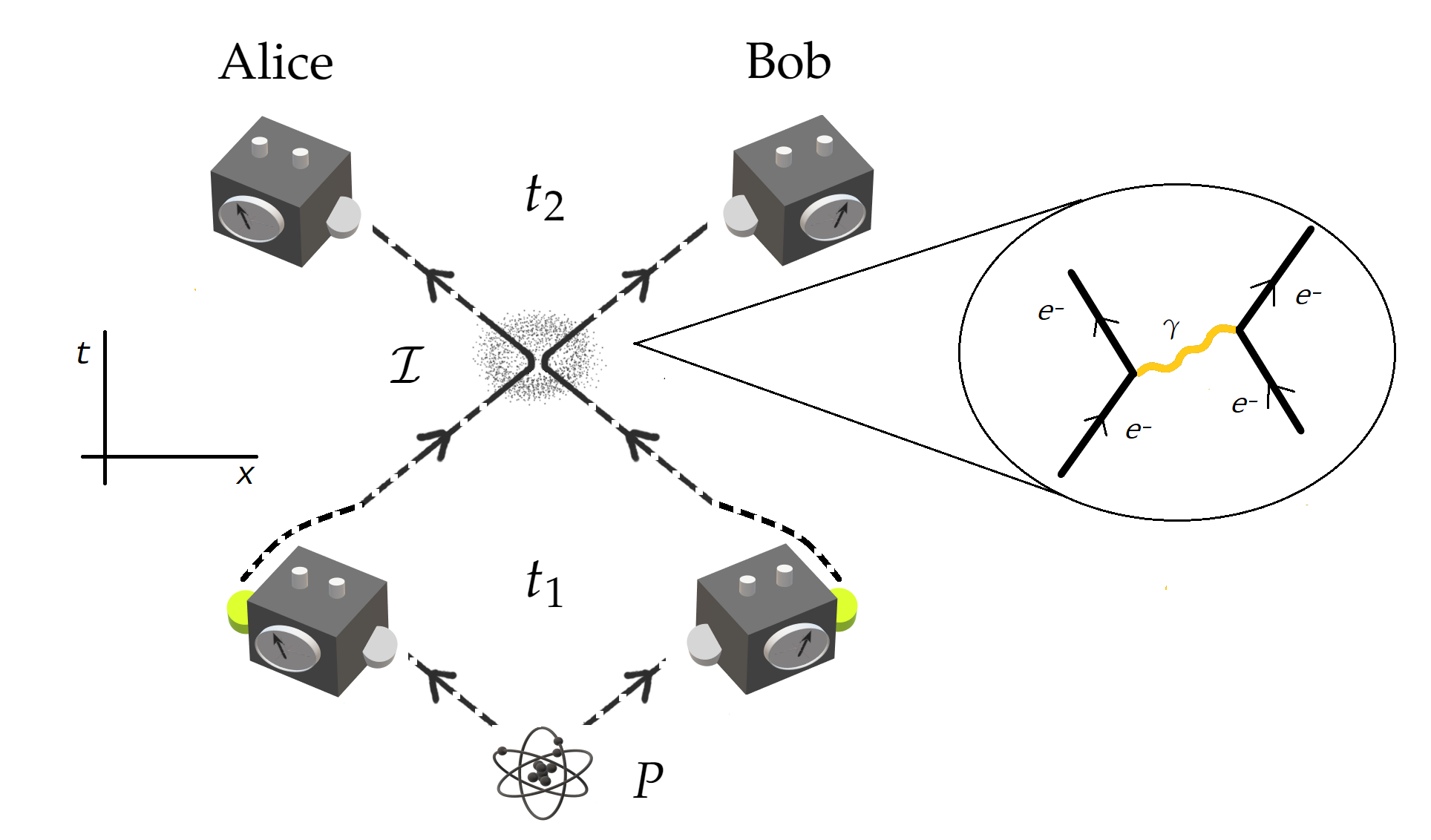}
    \caption{(Left) Setup for model-independent inference of quantum interaction. Particles from source $P$ are sent to spatially separated labs where Alice and Bob reside. At time $t_1$, Alice and Bob choose $x_1, y_1 = 0,1$ for local measurements, yielding $a_1, b_1 = 0,1$. Post-measured particles interact via $\mathcal{I}$, and return to their respective labs at $t_2$. At $t_2$, inputs $x_2, y_2 = 0,1$ are chosen by Alice and Bob respectively resulting in $a_2, b_2 = 0,1$. (Right) An example of an entangling quantum interaction. Two electrons interacting via dynamical scattering can get entangled either in spin or momentum degrees of freedom.}
    \label{fig1}
\end{figure*} 

A source $P$ sends particles to Alice and Bob who are located in spatially separated labs. Alice and Bob can freely choose two inputs each denoted by $x_1,y_1=0,1$ respectively based on which local measurements are performed on their particles at time $t_1$. Furthermore, each measurement results in two outputs denoted by $a_1,b_1=0,1$ for Alice and Bob respectively. Then, the post-measured particles are allowed to leave their labs after which they interact via some interaction $\mathcal{I}$. After the interaction, the particles come back to Alice and Bob at time $t_2$. Again, Alice and Bob choose two inputs $x_2,y_2=0,1$ based on which the incoming particles are measured and results in two outputs $a_2,b_2=0,1$ [see Fig. \ref{fig1}]. 

By repeating the above-described procedure, they obtain two joint probability distributions often referred to as correlations. The first one is obtained at time $t_1$ given by $\vec{p}_1=\{p(a_1,b_1|x_1,y_1,P)\}$ where $p(a_1,b_1|x_1,y_1,P)$ denotes the probability of obtaining outcomes $a_1,b_1$ given the inputs $x_1,y_1$ given the source $P$. The second probability distribution is obtained at time $t_2$ given by 
\begin{eqnarray}
    \vec{p}_{2,a_1,b_1,x_1,y_1}=\{p(a_2,b_2|x_2,y_2,a_1,b_1,x_1,y_1,\mathcal{I},P)\}
\end{eqnarray}
where $p(a_2,b_2|x_2,y_2,a_1,b_1,x_1,y_1,\mathcal{I},P)$ denotes the conditional probability of obtaining outcome $a_2,b_2$ at time $t_2$ given inputs $x_2,y_2$ when Alice and Bob at time $t_1$ obtained $a_1,b_1$ with inputs $x_1,y_1$ and the post-measured states interacted via some interaction $\mathcal{I}$. 

Let us now describe the above-proposed scenario [see Fig \ref{fig1}] within quantum theory. The probability $p(a_1,b_1|x_1,y_1,P)$ is obtained via the Born rule as
\begin{eqnarray}\label{probs}
    p(a_1,b_1|x_1,y_1,P)=\Tr(M^A_{a_1|x_1}\otimes M^B_{b_1|y_1}\rho_{AB})
\end{eqnarray}
where $M^A_{a_1|x_1}, M^B_{b_1|y_1}$ denote the measurement elements of Alice and Bob such that these elements are positive and $\sum_aM^A_{a_1|x_1}=\sum_bM^B_{b_1|y_1}=\I$. Here, $\rho_{AB}$ is the quantum state prepared by the source $P$. One could similarly define the above formula \eqref{probs} at time $t_2$ where the state $\rho_{AB}$ will be replaced by the post-interaction states defined below in \eqref{post-int statem}. It is often helpful to express the correlations in terms of the expected values of observables which are defined as
\begin{equation}\label{obspic}
    \langle A_{m}\otimes B_{l} \rangle
     = \sum_{a,b=0}^{1} (-1)^{a+b} p(a,b|m,l,\ldots,P)
\end{equation}
Notice that by using Eq. \eqref{probs}, these expectation values can be expressed as $\langle A_{m}\otimes B_{l} \rangle = \Tr[A_{m}\otimes B_{l}\rho_{AB}]$, where $A_{m},B_{l}$ are quantum operators, referred to as observables, defined via the measurement elements as $A_{m}=M^A_{0|m}-M^A_{1|m},\ B_{l}=M^B_{0|l}-M^B_{1|l}$ for every $m,l$. When the measurement is projective, then the corresponding observable is unitary. Furthermore, if the measurements of Alice and Bob are projective and they observe the outputs $a_1,b_1$ given the inputs $x_1,y_1$ respectively, then the post-measurement states are given by
\begin{eqnarray}
    \rho'_{a_1,b_1,x_1,y_1}=\frac{M^A_{a_1|x_1}\otimes M^B_{b_1|y_1}\rho_{AB}M^A_{a_1|x_1}\otimes M^B_{b_1|y_1}}{p(a_1,b_1|x_1,y_1)}.
\end{eqnarray}
Then, the particles interact via some Hamiltonian $H(t)$ for a time $\delta t$. Consequently, the state $\rho'_{a_1,b_1,x_1,y_1}$ at time $t_1$ evolves via the unitary process $V(\delta t)$ \cite{Sakurai}. For instance, when the Hamiltonian is time-independent $H(t)\equiv H$ for any time $t$, then $V(\delta t)$ is given by $ V(\delta t)=e^{-iH\delta t}$. Now, the states after the interaction, referred to as post-interaction states, is given by 
\begin{eqnarray}\label{post-int statem}
    V(\delta t)\rho'_{a_1,b_1,x_1,y_1}V(\delta t)^{\dagger}=\sigma_{a_1,b_1,x_1,y_1}.
\end{eqnarray}
For simplicity, we will further represent $V(\delta t)\equiv V$. Without loss of generality, we  consider that the unitary maps quantum states from the Hilbert space $\mathcal{H}_{A(t_1)}\otimes\mathcal{H}_{B(t_1)}$ to a different Hilbert space $\mathcal{H}_{A(t_2)}\otimes\mathcal{H}_{B(t_2)}$.

\textit{Model-independent inference---} Inspired by the idea of self-testing [see \cite{supic1} for a review], let us introduce the idea of model-independent inference of a quantum interaction via statistics by referring back to the scenario shown in Fig. \ref{fig1}. First, we consider that the measurements conducted by the parties, the state prepared by the source as well as the interaction between the post-measured states are unknown except for the fact that they obey quantum theory. The only other information that Alice and Bob
have about the whole scenario is via the observed correlations $\vec{p}_1, \vec{p}_{2,a_1,b_1,x_1,y_1}$.
It is worth mentioning here that we assume that dimensions of the local Hilbert spaces $\mathcal{H}_{A(t_i)},\mathcal{H}_{B(t_i)}$ for $i=1,2$ is unknown but finite. 

Let us then consider a reference experiment giving rise to the same correlations $\vec{p}_1$ when some known observables $A'_{m},B'_{l}$ are performed on a known quantum state prepared by the source 
$\ket{\psi'_{AB}}\in\mathcal{H}_{A'(t_1)}\otimes\mathcal{H}_{B'(t_1)}$. Then the post-measured states interact via some known unitary $V'$ after which they are measured using the same known observables $A'_{m},B'_{l}$ to obtain $\vec{p}_{2,a_1,b_1,x_1,y_1}$. The task of model-independent inference is to deduce from the observed $\vec{p}_1,\vec{p}_{2,a_1,b_1,x_1,y_1}$ that the actual experiment is equivalent to the reference one in the following sense: (i) the local Hilbert spaces admit the product form $\mathcal{H}_{A(t_i)}=\mathcal{H}_{A'(t_i)}\otimes\mathcal{H}_{A''(t_i)}$ and $\mathcal{H}_{B(t_i)}=\mathcal{H}_{B'(t_i)}\otimes\mathcal{H}_{B''(t_i)}$ for some auxiliary Hilbert spaces $\mathcal{H}_{A''(t_i)}$ and $\mathcal{H}_{B''(t_i)}$; (ii) there are local unitary operations  $U_{s(t_i)}:\mathcal{H}_{s(t_i)}\to \mathcal{H}_{s'(t_i)}\otimes\mathcal{H}_{s''(t_i)}$ for $s=A,B$ and $i=1,2$
such that
\begin{eqnarray}\label{ststate}
 (U_{A(t_1)}\otimes U_{B(t_1)})\rho_{AB}(U_{A(t_1)}\otimes U_{B(t_1)})^{\dagger}\qquad\nonumber\\ \qquad=\proj{\psi'}_{A'(t_1)B'(t_1)}\otimes\xi_{A''(t_1)B''(t_1)},
 %
\end{eqnarray}
where  $\xi_{A_i''E_i''}$ acts on $\mathcal{H}_{A''(t_i)}\otimes\mathcal{H}_{B''(t_i)}$ is some auxiliary quantum state, and
\begin{eqnarray}\label{stmea}
U_{A(t_i)}\,\overline{A}_{m(t_i)}\,U_{A(t_i)}^{\dagger}&=&A_{m}'\otimes\mathbbm{1}_{A''(t_i)},\nonumber \\U_{B(t_i)}\,\overline{B}_{l(t_i)}\,U_{B(t_i)}^{\dagger}&=&B_{l}'\otimes\mathbbm{1}_{B''(t_i)},
\end{eqnarray}
where $\mathbbm{1}_{s''(t_i)}$ is the identity acting on the parties auxiliary system and $\overline{A}_{m(t_i)}=\Pi^A_{i}A_{m}\Pi^A_{i}$ and  $\overline{B}_{l(t_i)}=\Pi^B_{i}B_{l}\Pi^B_{i}$ such that $\Pi^A_{i},\Pi^B_{i}$ denotes the projection of the observables $A_{m},B_{l}$ onto the Hilbert space $\mathcal{H}_{A(t_i)},\mathcal{H}_{B(t_i)}$ respectively; (iii) The interaction $V$ is certified as 
\begin{eqnarray}\label{stint}
    (U_{A(t_2)}\otimes U_{B(t_2)})\ V\ (U_{A(t_1)}\otimes U_{B(t_1)})^{\dagger}=V'\otimes V_0
\end{eqnarray}
where $V_0$ is a unitary matrix mapping $\mathcal{H}_{A''(t_1)}\otimes\mathcal{H}_{B''(t_1)}$ to $\mathcal{H}_{A''(t_2)}\otimes\mathcal{H}_{B''(t_2)}$.
For a note, if the above conditions (i) and (ii) are met one says that 
the reference state and measurements are self-tested in the actual experiment from the observed correlations. Then if condition (iii) is met, the interaction between the systems is certified in a model-independent way. 

Consider now  the following Bell inequalities for $a_1,b_1=0,1$
\begin{eqnarray}\label{BE1Nm}
\mathcal{B}_{a_1,b_1}=(-1)^{a_1} \left\langle\tilde{A}_{1}\otimes B_{1} +(-1)^{b_1}\tilde{A}_{0}\otimes B_{0} \right\rangle\leqslant  \beta_C
\end{eqnarray}
where,
\begin{eqnarray}\label{overAm}
    \tilde{A}_{0}=\frac{A_{0}-A_{1}}{\sqrt{2}},\qquad\tilde{A}_{1}=\frac{A_{0}+A_{1}}{\sqrt{2}}.
\end{eqnarray}
The classical bound of the above Bell inequalities is $\beta_C=\sqrt{2}$ for any $a_1,b_1$. 
Consider now the following states 
\begin{equation}\label{GHZvecsm}
\ket{\phi_{a_1,b_1}}=\frac{1}{\sqrt{2}}(\ket{a_1 b_1}+(-1)^{a_{1}}|a_1^{\perp}b_1^{\perp}\rangle),
\end{equation}
where $a_1^{\perp}=1-a_1,b_1^{\perp}=1-b_1$, and the following observables
\begin{equation}\label{GHZObsm}
A_{0}=\frac{X+Z}{\sqrt{2}},\quad A_{1}= \frac{X-Z}{\sqrt{2}}, \quad B_{0}=Z,\quad B_{1}=X. 
\end{equation}
As shown in Fact 1 of Appendix A, using these states \eqref{GHZvecsm} and observables \eqref{GHZObsm}, one can attain the value $\mathcal{B}_{a_1,b_1}=2$. This is the quantum bound $\beta_Q$ of $\mathcal{B}_{a_1,b_1}$, that is, the maximal value of $\mathcal{B}_{a_1,b_1}$ that can be attained within quantum theory.

Let us now suppose that correlations $\vec{p}_1$ achieves the quantum bound of $\mathcal{B}_{0,0}$, that is, the quantum state $\rho_{AB}$ maximally violate the Bell inequality $\mathcal{B}_{0,0}$. Furthermore, the correlations $\vec{p}_{2,a_1,b_1,0,0}$ achieves the quantum bound of $\mathcal{B}_{a_1,b_1}$ for each $a_1,b_1$, that is, the post-interaction quantum states $\sigma_{a_1,b_1,0,0}$ maximally violate the Bell inequalities $\mathcal{B}_{a_1,b_1}$. Along with them, one also needs to observe when $a_1=b_1=0$ and $x_1=1,y_1=1$, the post-interaction states $\sigma_{0,0,1,1}$ satisfy 
\begin{eqnarray}\label{extrastatm}
    \langle\I\otimes B_{1}\rangle= -\langle \tilde{A}_{0}\otimes\I\rangle=1.
\end{eqnarray}

Consider now the following unitary
\begin{eqnarray}\label{U2m}
\mathcal{U}=\sum_{a_1,b_1=0,1}\ket{\phi_{a_1,b_1}}\!\bra{\overline{a}_1b_1}
\end{eqnarray}
where, the $\{\ket{\overline{0}},\ket{\overline{1}}\}$ are the eigenvectors of $(X+Z)/\sqrt{2}$. It is straightforward to verify that if the state $\ket{\phi_{0,0}}$ \eqref{GHZvecsm} after being measured by the observables \eqref{GHZObsm} and obtaining outcome $a_1,b_1$ with inputs $x_1=y_1=0$ evolves via the unitary \eqref{U2m}, then one obtains the post-interaction state as $\ket{\phi_{a_1,b_1}}$. Consequently, all these states and observables satisfy the above-mentioned statistics and thus we take them as the reference quantum states, observables and interaction.

Let us now state the main result.

\begin{thm}\label{theorem1m}
Assume that the Bell inequality $\mathcal{B}_{0,0}$ \eqref{BE1Nm} is maximally violated at $t_1$. Furthermore, when Alice and Bob obtain the outcome $a_1,b_1$ with inputs $x_1=y_1=0$, then the Bell inequalities $\mathcal{B}_{a_1,b_1}$ \eqref{BE1Nm} for any $a_1,b_1$ are maximally violated at $t_2$. Along with it, when Alice and Bob observe the outcomes $a_1=b_1=0$ with inputs $x_1=1,y_1=1$, then the condition \eqref{extrastatm} is also satisfied. Then, the quantum state prepared by the source and the observables of both parties are certified as in \eqref{ststate} and \eqref{stmea}, with the reference strategies given below Eq. \eqref{U2m}. Importantly, the unitary $V$ is certified as defined in \eqref{stint}
\begin{eqnarray}\label{Vm}
    U_{A(t_2)}\otimes U_{B(t_2)}\ V\ U_{A(t_1)}^{\dagger}\otimes U_{B(t_1)}^{\dagger}=\mathcal{U}\otimes V_0
\end{eqnarray}
where $\mathcal{U}$ is given in Eq. \eqref{U2m} and $V_0$ is unitary.
\end{thm}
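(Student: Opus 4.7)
The plan is to chain two self-testing arguments—one at time $t_1$ from $\mathcal{B}_{0,0}$ and one at time $t_2$ from the four inequalities $\mathcal{B}_{a_1,b_1}$—and then use the extra condition \eqref{extrastatm} to assemble the interaction. First, I would invoke the self-testing theorems for $\mathcal{B}_{0,0}$ of Refs.~\cite{Marco,sarkar2023universal} to obtain local isometries $U_{A(t_1)},U_{B(t_1)}$ realising \eqref{ststate} and \eqref{stmea} at $t_1$, with reference state $|\phi_{0,0}\rangle$ and reference observables \eqref{GHZObsm}. Projectivity of the self-tested measurements then propagates to the post-measurement states: a short calculation in the isometric picture shows that $\rho'_{a_1,b_1,0,0}$ is mapped to $|\bar{a}_1 b_1\rangle\langle\bar{a}_1 b_1|\otimes\xi$ and $\rho'_{0,0,1,1}$ to $|\bar{\bar{0}}+\rangle\langle\bar{\bar{0}}+|\otimes\xi$, where $|\bar{a}_1\rangle,|\bar{\bar{0}}\rangle$ are the $+1$ eigenvectors of $A'_0,A'_1$ and $|b_1\rangle,|+\rangle$ those of $B'_0,B'_1$.

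Second, I would apply the same self-testing theorem to each post-interaction state $\sigma_{a_1,b_1,0,0}$, which by assumption maximally violates $\mathcal{B}_{a_1,b_1}$. Since each reference $|\phi_{a_1,b_1}\rangle$ is maximally entangled, the reduced state on either side has full rank on the primed qubit subspace, so the observable self-testing pins down the time-$t_2$ observables on that whole subspace; because the constructed isometry depends only on these observables, a \emph{single} pair $(U_{A(t_2)},U_{B(t_2)})$ works uniformly for all four post-selections, sending $\sigma_{a_1,b_1,0,0}$ to $|\phi_{a_1,b_1}\rangle\langle\phi_{a_1,b_1}|\otimes\xi^{(a_1,b_1)}$ while conjugating the $t_2$ observables into $A'_m\otimes\mathbbm{1},\ B'_l\otimes\mathbbm{1}$. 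The only state-dependent residue is the auxiliary $\xi^{(a_1,b_1)}$.

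Setting $W=(U_{A(t_2)}\otimes U_{B(t_2)})\,V\,(U_{A(t_1)}\otimes U_{B(t_1)})^\dagger$ and purifying $\xi$ to $|\xi\rangle$, the relation $V\rho'_{a_1,b_1,0,0}V^\dagger=\sigma_{a_1,b_1,0,0}$ rewrites in the isometric picture as
\begin{equation}
W\bigl(|\bar{a}_1 b_1\rangle\otimes|\xi\rangle\bigr)=|\phi_{a_1,b_1}\rangle\otimes|\xi^{(a_1,b_1)}\rangle,
\end{equation}
which already matches $\mathcal{U}$ on the primed factor but leaves the auxiliary vector possibly $(a_1,b_1)$-dependent. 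The decisive step—and the one I expect to be the main technical obstacle—is removing this dependence via \eqref{extrastatm}. Using $|\bar{\bar{0}}+\rangle=\tfrac{1}{2}\sum_{a_1,b_1}|\bar{a}_1 b_1\rangle$, linearity of $W$ gives $W(|\bar{\bar{0}}+\rangle\otimes|\xi\rangle)=\tfrac{1}{2}\sum_{a_1,b_1}|\phi_{a_1,b_1}\rangle\otimes|\xi^{(a_1,b_1)}\rangle$, while \eqref{extrastatm}, translated through the self-tested form of the $t_2$ observables, forces the primed marginal to be the unique joint eigenstate $|1+\rangle$ of $\tilde{A}'_0\otimes\mathbbm{1}$ with eigenvalue $-1$ and $\mathbbm{1}\otimes B'_1$ with eigenvalue $+1$. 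A direct expansion using \eqref{GHZvecsm} shows $|1+\rangle=\tfrac{1}{2}\sum_{a_1,b_1}|\phi_{a_1,b_1}\rangle$; purity of the primed marginal together with orthonormality of $\{|\phi_{a_1,b_1}\rangle\}$ then collapses all $|\xi^{(a_1,b_1)}\rangle$ onto a common vector $|\xi'\rangle$. Defining $V_0$ to be any unitary extension of $|\xi\rangle\mapsto|\xi'\rangle$ yields $W=\mathcal{U}\otimes V_0$, which is precisely \eqref{Vm}.
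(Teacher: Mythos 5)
Your route is the same as the paper's: self-test the state and observables at $t_1$ from $\mathcal{B}_{0,0}$, self-test the four post-interaction states and the $t_2$ observables from $\mathcal{B}_{a_1,b_1}$, read off the action of $W=(U_{A(t_2)}\otimes U_{B(t_2)})V(U_{A(t_1)}\otimes U_{B(t_1)})^{\dagger}$ on the certified vectors $\ket{\overline{a}_1 b_1}\otimes\ket{\xi}$, and use \eqref{extrastatm} to identify the residual auxiliary parts. Your handling of \eqref{extrastatm} (purity of the primed marginal of $\sigma_{0,0,1,1}$, equal to $\ket{1+}=\tfrac12\sum_{a_1,b_1}\ket{\phi_{a_1,b_1}}$, hence a common $\ket{\xi'}$) is an equivalent and slightly cleaner repackaging of the paper's step, which instead applies $(\I\otimes B_1)\ket{\sigma_{0,0,1,1}}=\ket{\sigma_{0,0,1,1}}$ and $(\tilde{A}_0\otimes\I)\ket{\sigma_{0,0,1,1}}=-\ket{\sigma_{0,0,1,1}}$ and the permutation action of $X,Z$ on the Bell basis to get $V_{00}=V_{01}=V_{10}=V_{11}$.

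Two steps are thinner than what is actually needed. First, the claim that a single pair $(U_{A(t_2)},U_{B(t_2)})$ works for all four post-selections does not follow merely from each $\ket{\phi_{a_1,b_1}}$ being maximally entangled: before the isometry is built there is no common ``primed qubit subspace,'' and the local supports of the different $\sigma_{a_1,b_1,0,0}$ need not coincide. The paper closes this by showing the anticommutation relations extend from each support to their union (Lemma 2 of \cite{sarkar2023self}), and only then defines one unitary per site at $t_2$. Second, and more importantly, your last sentence does not yield the operator identity \eqref{Vm}: knowing $(W\otimes\I_E)(\ket{\overline{a}_1 b_1}\otimes\ket{\xi})=\ket{\phi_{a_1,b_1}}\otimes\ket{\xi'}$ for a single purification vector $\ket{\xi}$, and then choosing ``any unitary extension of $\ket{\xi}\mapsto\ket{\xi'}$'' as $V_0$, does not force $W=\mathcal{U}\otimes V_0$; an arbitrary such extension need not coincide with $W$'s action off the vector $\ket{\xi}$, and in your setup $V_0$ would act on the purifying register $E$, whereas in \eqref{stint} it must map $\mathcal{H}_{A''(t_1)}\otimes\mathcal{H}_{B''(t_1)}$ to $\mathcal{H}_{A''(t_2)}\otimes\mathcal{H}_{B''(t_2)}$ only. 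The logic must run the other way: as in the paper, take $\xi$ full rank without loss of generality (the interaction can only be characterised on the support of the $\rho'$), expand $\ket{\xi}$ in its Schmidt basis and match the $E$ components, so that the vector relations determine $W$ on the entire auxiliary support and force it to factorise as $\mathcal{U}\otimes V_0$ with $V_0$ the (automatically unitary) auxiliary factor of $W$ itself. With that repair, your argument coincides with the paper's proof.
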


The proof of the above theorem is presented in Appendix A. Here, we provide a short description of the proof. The proof is mainly divided into two parts. In the first part, we self-test the state prepared by the source and the measurements acting on the local support of this state. This further allows one to certify the post-measurement states of the parties. We then self-test the post-interaction states along with the measurements acting on their local supports. Both of these self-tests are based on the sum of squares (SOS) decomposition of the Bell operator corresponding to the inequality \eqref{BE1Nm} for which we follow the techniques introduced in \cite{sarkar2023universal}. Importantly, we do not assume any state to be pure or measurements to be projective in our proof. In the second part, utilizing the certified post-measured and post-interaction states and the condition \eqref{extrastatm}, we then conclude that the only quantum interaction that can reproduce all the statistics has to be of the form \eqref{Vm}. 

Let us now generalize the above result to the scenario where arbitrary number of particles interact with each other via some entangling quantum interaction.

\textit{N-system interaction---} 
We begin by generalizing the scenario depicted in Fig. \ref{fig1}. A source $P_N$ sends particles to $N$ Alices who are located in spatially separated labs. All the parties freely choose two inputs each denoted by $x_{n,1}=0,1$ for $n=1,\ldots,N$ based on which local measurements are performed on their particles at time $t_1$ respectively. Furthermore, each measurement results in two outputs denoted by $a_{n,1}=0,1$ for the $n$-th Alice. Then, the post-measured particles are allowed to leave their labs after which they interact via some interaction $\mathcal{I}_N$. After the interaction, the particles come back to their respective starting labs at time $t_2$. Again, all the parties choose two inputs $x_{n,2}=0,1$ based on which the incoming particles are measured and results in two outputs $a_{n,2}=0,1$. We further denote $\mathbf{x_i}=x_{1,i},\ldots,x_{N,i}$ and $\mathbf{a_i}=a_{1,i},\ldots,a_{N,i}$ for $i=1,2$. 

By repeating the above-described procedure, they obtain two joint probability distributions or correlations. The first one is obtained at time $t_1$ given by $\vec{p}_{N,1}=\{p(\mathbf{a_1}|\mathbf{x_1},P_N)\}$ where $p(\mathbf{a_1}|\mathbf{x_1},P_N)$ denotes the probability of obtaining outcome $\mathbf{a_1}$ given the input $\mathbf{x_1}$ and source $P_N$. The second probability distribution is obtained at time $t_2$ given by 
\begin{eqnarray}
    \vec{p}_{N,2,\mathbf{a_1},\mathbf{x_1}}=\{p(\mathbf{a_2}|\mathbf{x_2},\mathbf{a_1},\mathbf{x_1},\mathcal{I}_N,P_N)\}
\end{eqnarray}
where $p(\mathbf{a_2}|\mathbf{x_2},\mathbf{a_1},\mathbf{x_1},\mathcal{I}_N,P_N)$ denotes the conditional probability of obtaining outcome $\mathbf{a_2}$ at time $t_2$ given inputs $\mathbf{x_2}$ when Alice and Bob at time $t_1$ obtained $\mathbf{a_1}$ with input $\mathbf{x_1}$ and the post-measured states interacted via some interaction $\mathcal{I}_N$. 

Within quantum theory, the probability $p(\mathbf{a_1}|\mathbf{x_1},P_N)$ is obtained via the Born rule as
\begin{eqnarray}\label{probsN}
    p(\mathbf{a_1}|\mathbf{x_1},P_N)=\Tr\left(\bigotimes_{n=1}^N M^n_{a_{n,1}|x_{n,1}}\rho_{N}\right)
\end{eqnarray}
where $M^n_{a_{n,1}|x_{n,1}}$ denote the measurement elements of $n-$th Alice with $\rho_{N}$ denoting the quantum state prepared by $P_N$. If the measurements of all parties are projective and they observe the output $\mathbf{a_1}$ given the input $\mathbf{x_1}$, then the post-measurement states are given by
\begin{eqnarray}
    \rho'_{\mathbf{a_1},\mathbf{x_1}}=\frac{\left(\bigotimes_{n=1}^N M^n_{a_{n,1}|x_{n,1}}\right)\rho_{N}\left(\bigotimes_{n=1}^N M^n_{a_{n,1}|x_{n,1}}\right)}{ p(\mathbf{a_1}|\mathbf{x_1},P_N)}.
\end{eqnarray}
Then, the particles interact via some unitary $V_N$ to evolve to the post-interaction states $\sigma_{\mathbf{a_1},\mathbf{x_1}}$. We consider that the unitary $V_N$ maps quantum states from the Hilbert space $\bigotimes_n\mathcal{H}_{A_n(t_1)}$ to some other Hilbert space $\bigotimes_n\mathcal{H}_{A_n(t_2)}$.

\textit{Model-independent inference for $N$ systems---} Let us consider a reference experiment giving rise to the same correlations $\vec{p}_{N,1}$ when some known observables $A'_{n,l}$ for $n=1,\ldots,N$ and $l=0,1$ are performed on a known quantum state prepared by the source 
$\ket{\psi'_{N}}\in\bigotimes_n\mathcal{H}_{A'_n(t_1)}$. Then the post-measured states interact via some known unitary $V'_N$ after which they are measured using the same known observables $A'_{n,l}$ to obtain $\vec{p}_{N,2,\mathbf{a_1},\mathbf{x_1}}$. If one observes the same correlations in the actual experiment, then it is equivalent to the reference one as: (i) the local Hilbert spaces admit the product form $\mathcal{H}_{A_n(t_i)}=\mathcal{H}_{A_n'(t_i)}\otimes\mathcal{H}_{A_n''(t_i)}$ for some auxiliary Hilbert spaces $\mathcal{H}_{A_n''(t_i)}$; (ii) there are local unitary operations  $U_{n(t_i)}:\mathcal{H}_{A_n(t_i)}\to \mathcal{H}_{A_n'(t_i)}\otimes\mathcal{H}_{A_n''(t_i)}$ for $i=1,2$
such that
\begin{equation}\label{ststateNm}
 \left(\bigotimes_nU_{n(t_1)}\right)\rho_{AB}\left(\bigotimes_nU_{n(t_1)}\right)^{\dagger}=\proj{\psi'_N}\otimes\xi_{\mathbf{A''(t_1)}}
\end{equation}
where  $\xi_{\mathbf{A_1''(t_1)}}$ acts on $\bigotimes_n\mathcal{H}_{A''_n(t_i)}$ is some auxiliary quantum state, and
\begin{eqnarray}\label{stmeaNm}
U_{n(t_i)}\,\overline{A}_{n,l(t_i)}\,U_{n(t_1)}&=&A_{n,l}'\otimes\mathbbm{1}_{A_{n}''(t_i)}
\end{eqnarray}
where $\mathbbm{1}_{A_n''(t_i)}$ is the identity acting on the parties auxiliary system and $\overline{A}_{{n,l}(t_i)}=\Pi^n_{i}A_{n,l}\Pi^n_{i}$ such that $\Pi^n_{i}$ denotes the projection of the observables $A_{n,l}$ onto the Hilbert space $\mathcal{H}_{A_n(t_i)}$; (iii) The interaction $V_N$ is certified as 
\begin{eqnarray}\label{stintNm}
   \left(\bigotimes_nU_{n(t_2)}\right)\ V_N\left(\bigotimes_nU_{n(t_1)}\right)^{\dagger}=V'_N\otimes V_{\mathrm{aux}}
\end{eqnarray}
where $V_{\mathrm{aux}}$ is a unitary matrix mapping $\bigotimes_n\mathcal{H}_{A_n(t_1)''}$ to $\bigotimes_n\mathcal{H}_{A_n(t_2)''}$.

Consider now  the following Bell inequalities for any $\mathbf{a_1}$ introduced in \cite{sarkar2023universal}
\begin{eqnarray}\label{BE1NNm}
\mathcal{B}_{\mathbf{a_1}}=(-1)^{a_{1,1}} \left\langle(N-1)\tilde{A}_{1,1}\otimes\bigotimes_{n=2}^N A_{n,1} \qquad\quad\right.\nonumber\\\left.+\sum_{i=2}^N(-1)^{a_{n,1}}\tilde{A}_{1,0}\otimes A_{n,0}\right\rangle\leqslant  \beta_C
\end{eqnarray}
where $A_{n,1}$ are quantum observables defined via the measurement operators of the $n-$th party as in \eqref{obspic} and
\begin{eqnarray}\label{overANm}
    \tilde{A}_{1,0}=\frac{A_{0}-A_{1}}{\sqrt{2}},\qquad\tilde{A}_{1,1}=\frac{A_{0}+A_{1}}{\sqrt{2}}.
\end{eqnarray}
The classical bound of the above Bell inequalities is $\beta_C=\sqrt{2}(N-1)$ for any $\mathbf{a_1}$. 
Consider now the following states 
\begin{equation}\label{GHZvecsNm}
\ket{\phi_{\mathbf{a_1}}}=\frac{1}{\sqrt{2}}(\ket{a_{1,1}\ldots a_{N,1}}+(-1)^{a_{1,1}}\ket{a_{1,1}^{\perp}\ldots a_{N,1}^{\perp}}),
\end{equation}
where $a_{n,1}^{\perp}=1-a_{n,1}$, and the following observables for $n=2,\ldots,N$
\begin{equation}\label{GHZObsNm}
A_{1,0}=\frac{X+Z}{\sqrt{2}},\quad A_{1,1}= \frac{X-Z}{\sqrt{2}}, \quad A_{n,0}=Z,\quad B_{n,1}=X. 
\end{equation}
As shown in Fact 2 of Appendix B, using these states \eqref{GHZvecsNm} and observables \eqref{GHZObsNm}, one can attain the quantum bound $\beta_Q=2(N-1)$ of all the Bell inequalities \eqref{BE1NNm}.

Let us now suppose that correlations $\vec{p}_{N,1}$ achieves the quantum bound of $\mathcal{B}_{0,\ldots,0}$. Furthermore, the correlations $\vec{p}_{N,2,\mathbf{a_1},0,0,1,\ldots,1}$ achieves the quantum bound of $\mathcal{B}_{\mathbf{a_1}}$ for every $\mathbf{a_1}$. Along with them, suppose one also observes that when $a_{n,1}=0$ for all $n$ and $x_{1,1}=x_{2,1}=1,\ x_{n,1}=0 (n=3,\ldots,N)$ that the post-interaction state satisfies
\begin{eqnarray}\label{extrastatNm}
    -\langle \tilde{A}_{1,0}\otimes\I\rangle=\langle\I\otimes A_{n,1}\rangle=1.\qquad n=2,\ldots,N.
\end{eqnarray}

Consider now the following unitary
\begin{eqnarray}\label{UNm}
\mathcal{U_N}=\sum_{\mathbf{a_1}=0,1}\ket{\phi_{\mathbf{a_1}}}\!\bra{\overline{a}_{1,1}a_{2,1}a_{3,1}^x\ldots a_{N,1}^x}
\end{eqnarray}
where $\{\ket{\overline{0}},\ket{\overline{1}}\}$ and $\{\ket{0^x},\ket{1^x}\}$ are the eigenvectors of $(X+Z)/\sqrt{2}$ and $X$ respectively. It is straightforward to verify that if the state $\ket{\phi_{0,\ldots,0}}$ \eqref{GHZvecsNm} after being measured by the observables \eqref{GHZObsNm} and obtaining outcome $\mathbf{a_1}$ with inputs $x_{n,1}=0$ evolves via the unitary \eqref{UNm}, then one obtains the post-interaction state as $\ket{\phi_{\mathbf{a_1}}}$. Consequently, all these states and observables satisfy the above-mentioned statistics and thus we take them as the reference quantum states, observables and interaction.

Let us now state the result.

\begin{thm}\label{theorem1Nm}
Assume that the Bell inequality $\mathcal{B}_{0,\ldots,0}$ \eqref{BE1NNm} is maximally violated at $t_1$. Furthermore, when the parties obtain the outcome $\mathbf{a_1}$ with inputs $x_{1,1}=x_{2,1}=0$ and $x_{n,1}=1(n=3,\ldots,N)$, then the Bell inequalities $\mathcal{B}_{\mathbf{a_1}}$ \eqref{BE1NNm} for any $\mathbf{a_1}$ are maximally violated at $t_2$. Along with it, when the parties observe the outcomes $a_{n,1}=0$ for all $n$ and $x_{1,1}=x_{2,1}=1,\ x_{n,1}=0 (n=3,\ldots,N)$, then the condition \eqref{extrastatNm} is also satisfied. Then, the quantum state prepared by the source and the observables of both parties are certified as in \eqref{ststateNm} and \eqref{stmeaNm}, with the reference strategies given below Eq. \eqref{UNm}. Importantly, the unitary $V_N$ is certified as defined in \eqref{stintNm}
\begin{eqnarray}\label{VNm}
      \left(\bigotimes_nU_{n(t_2)}\right)\ V_N\left(\bigotimes_nU_{n(t_1)}\right)^{\dagger}=\mathcal{U_N}\otimes V_{\mathrm{aux}}
\end{eqnarray}
where $\mathcal{U_N}$ is given in Eq. \eqref{UNm} and $V_{\mathrm{aux}}$ is unitary.
\end{thm}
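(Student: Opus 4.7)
The plan is to mirror the two-part structure of the proof of Theorem~\ref{theorem1m} in Appendix~A, promoting every step to $N$ parties. In the first part I would apply the sum-of-squares (SOS) decomposition of the Bell operator associated with $\mathcal{B}_{0,\ldots,0}$ constructed in \cite{sarkar2023universal}: maximal violation forces the source state $\rho_N$ and the projections $\overline{A}_{n,l(t_1)}$ of the local observables onto its support to be unitarily equivalent to the reference state $|\phi_{0,\ldots,0}\rangle$ and to the observables listed in \eqref{GHZObsNm}. This produces local isometries $U_{n(t_1)}$ and an auxiliary state $\xi_{\mathbf{A''(t_1)}}$ that realise \eqref{ststateNm} and \eqref{stmeaNm} at $t_1$. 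As in the two-party case, I would keep the argument free of any assumption of purity of the state or projectivity of the measurements by exploiting the Jordan-block structure that the SOS naturally induces on the observables.

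Next, using the already certified state and observables, I would compute the post-measurement state for the inputs $(x_{1,1},x_{2,1},x_{3,1},\ldots,x_{N,1})=(0,0,1,\ldots,1)$ and outcomes $\mathbf{a_1}$. On the reference $|\phi_{0,\ldots,0}\rangle$, Alice~2's $Z$-measurement disentangles the state to $|a_{2,1}\rangle^{\otimes N}$, after which Alice~1 projects onto the $(X+Z)/\sqrt{2}$-eigenstate $|\overline{a}_{1,1}\rangle$ and each Alice $n\ge 3$ projects onto the $X$-eigenstate $|a_{n,1}^x\rangle$, giving the product $|\overline{a}_{1,1}\rangle|a_{2,1}\rangle|a_{3,1}^x\rangle\cdots|a_{N,1}^x\rangle$ on the primed Hilbert space. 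I would then apply the same SOS-based self-testing at $t_2$, separately for each outcome $\mathbf{a_1}$, using the maximal violation of $\mathcal{B}_{\mathbf{a_1}}$ to certify the post-interaction state as $|\phi_{\mathbf{a_1}}\rangle$ and the $t_2$ observables as those of \eqref{GHZObsNm}. A key point, which I would verify carefully, is that the $t_2$ observables are physically the same across all branches, so the isometries $U_{n(t_2)}$ produced by the self-test can be taken to be independent of $\mathbf{a_1}$.

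Combining the two parts pins down the action of $V_N$ on an entire orthonormal basis of the reference Hilbert space: after conjugation by the certified isometries it must send $|\overline{a}_{1,1}\rangle|a_{2,1}\rangle|a_{3,1}^x\rangle\cdots|a_{N,1}^x\rangle\otimes|\chi^{\mathbf{a_1}}\rangle$ to $|\phi_{\mathbf{a_1}}\rangle\otimes|\xi^{\mathbf{a_1}}\rangle$ for every $\mathbf{a_1}$, which is precisely the defining relation of $\mathcal{U_N}$ in \eqref{UNm} tensored with a branch-dependent auxiliary evolution. I would then use the constraint \eqref{extrastatNm} to rule out the remaining freedom: the reference post-measurement state for inputs $(1,1,0,\ldots,0)$ and outcomes $\mathbf{a_1}=\mathbf{0}$ is a fixed coherent superposition of the above basis vectors, so applying $V_N$ and imposing $-\langle\tilde{A}_{1,0}\otimes\mathbbm{1}\rangle=1$ together with $\langle\mathbbm{1}\otimes A_{n,1}\rangle=1$ at $t_2$ translates into linear conditions that force the auxiliary states $|\chi^{\mathbf{a_1}}\rangle$ and $|\xi^{\mathbf{a_1}}\rangle$ to be related by a single $\mathbf{a_1}$-independent unitary $V_{\mathrm{aux}}$, collapsing $V_N$ to $\mathcal{U_N}\otimes V_{\mathrm{aux}}$ as required by \eqref{VNm}.

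The hardest step, I expect, is exactly this last phase- and auxiliary-fixing argument: for $N$ parties the single scalar equation \eqref{extrastatm} of the two-party proof is replaced by the $N-1$ conditions in \eqref{extrastatNm}, while the number of branches $\mathbf{a_1}$ grows as $2^N$, so one must verify that the coefficients of the reference post-measurement state in the basis $\{|\overline{a}_{1,1}\rangle|a_{2,1}\rangle|a_{3,1}^x\rangle\cdots|a_{N,1}^x\rangle\}$ are all non-zero and that the induced linear system, together with the freedom to absorb a global phase into $V_{\mathrm{aux}}$, is non-degenerate. I anticipate this will work by the same triangular-elimination strategy that concludes Appendix~A, with the single Bob of the two-party case being replaced by Alice~2 together with Alices $n\ge 3$, but the combinatorics across $N$ parties will require careful bookkeeping to confirm that no residual diagonal ambiguity survives.
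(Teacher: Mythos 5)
Your plan reproduces the paper's own proof essentially step for step: SOS-based self-testing of $\rho_N$ and the $t_1$ observables, branch-wise self-testing of the post-interaction states at $t_2$ (with the same care that the $U_{n(t_2)}$ are taken $\mathbf{a_1}$-independent, handled in the paper via the union of local supports and Lemma~2 of \cite{sarkar2023self}), block-decomposition of $V_N$ fixed by mapping certified pre- to post-interaction states, and finally the conditions \eqref{extrastatNm} acting as $X$- and $Z$-type flip relations on $\ket{\phi_{\mathbf{a_1}}}$ to force all branch unitaries to coincide with a single $V_{\mathrm{aux}}$. The only slight imprecision is that the auxiliary state accompanying the certified post-measurement states is the \emph{same} $\xi$ (full rank, coming from the source certification) for every branch rather than a branch-dependent $\ket{\chi^{\mathbf{a_1}}}$, which is exactly what makes the block-proportionality and the final elimination work; this does not change the argument.
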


The proof of the above theorem follows similar lines as the two-system interaction case and is presented in Appendix B. 

\textit{Discussions---} The controlled-not gate which is an entangling quantum operation has been self-tested in \cite{pavel}. However, their scheme required not applying the gate at half the rounds of the experiment, thus making it not applicable to certify the interaction between two systems as one would be required to assume a particular model of the interaction to impose it, or put it simply, one can not switch off the interaction between two systems without assuming the nature of the interaction. Then, the scheme of \cite{pavel} also needed two independent sources which is again an additional assumption that can not be verified. Here, we do not make any such assumptions thus making our scheme model-independent. Although our proposed scheme currently has limited applicability, it can be considered as an alternative to infer quantum interaction rather than the standard approach to observe scattering amplitudes. 

For a remark, one might argue that the measurement device might cheat by not sending the post-measurement states at $t_1$ but sending some arbitrary ones. However, this can be easily verified by randomly choosing some rounds of the experiment where instead of sending the post-measured systems to interact, one measures it again in the same lab. As the measurements are certified to be projective, if one obtains any other outcome than the one observed at $t_1$ then the device is cheating. Furthermore, if one assumes that the interaction does not change the local supports of the post-measured states, then the source, instead of preparing a maximally entangled state, can even send local white noise to both the detectors and the scheme would still allow certification of the interaction between the post-measured states. 

Several follow-up problems arise from our work. The first problem would be to compute the robustness of the proposed scheme towards experimental errors. Then, one could explore the other classes of interaction that could be certified in a model-independent way. Furthermore, one might extend this approach where one could also certify time-dependent evolutions.

\textit{Acknowledgements---}
 This project was funded within the QuantERA II Programme (VERIqTAS project) that has received funding from the European Union’s Horizon 2020 research and innovation programme under Grant Agreement No 101017733 and from the Polish National Science Center (project No 2021/03/Y/ST2/00175).

\providecommand{\noopsort}[1]{}\providecommand{\singleletter}[1]{#1}%

\appendix

\onecolumngrid

\section{Appendix A: Model-independent inference of two-system entangling quantum interaction}

\subsection{Reference statistics}

Consider the following reference interaction that acts between the incoming particles given by the unitary
\begin{eqnarray}\label{U2}
\mathcal{U}=\ket{\phi_{0,0}}\!\bra{\overline{0}0}+\ket{\psi_{1,0}}\!\bra{\overline{1}0}+\ket{\psi_{0,1}}\!\bra{\overline{0}1}+\ket{\phi_{1,1}}\!\bra{\overline{1}1}
\end{eqnarray}
where, $\ket{\phi_{a_1,b_1}}$ are given in \eqref{GHZvecsm} and $\{\ket{\overline{0}},\ket{\overline{1}}\}$ are the eigenvectors of $(X+Z)/\sqrt{2}$ given in the computational basis as
\begin{eqnarray}\label{X+Z}
\ket{\overline{0}}=\cos(\pi/8)\ket{0}+\sin(\pi/8)\ket{1},\qquad \ket{\overline{1}}=-\sin(\pi/8)\ket{0}+\cos(\pi/8)\ket{1}.
\end{eqnarray}

It is clear from the above unitary that for some systems initially in product states will become entangled if they interact via this unitary process.


Consider now the scenario depicted in Fig. \ref{fig1} and that the source prepares some state $\rho_{AB}$ on which Alice and Bob measure at time $t_1$. Let us suppose that Alice and Bob choose the input $x_1,y_1$ and obtain the outcomes $a_1,b_1$ respectively. The post-measurement states are denoted as $\rho'_{a_1,b_1,x_1,y_1}$. Then, the states at time $t_1$ evolve via some unitary process $V$ \cite{Sakurai}.
Now, the post-interaction states $\sigma_{a_1,b_1,x_1,y_1}$ for $a_1,b_1=0,1$ and $x_1=y_1=0$ need to maximally violate the following Bell inequalities
\begin{eqnarray}\label{BE1N}
\mathcal{B}_{a_1,b_1}=(-1)^{a_1} \left\langle\tilde{A}_{1}\otimes B_{1} +(-1)^{b_1}\tilde{A}_{0}\otimes B_{0} \right\rangle\leqslant  \beta_C,
\end{eqnarray}
where,
\begin{eqnarray}\label{overA}
    \tilde{A}_{0}=\frac{A_{0}-A_{1}}{\sqrt{2}},\qquad\tilde{A}_{1}=\frac{A_{0}+A_{1}}{\sqrt{2}}.
\end{eqnarray}

\begin{fakt}The maximal quantum value of the Bell expression $\mathcal{B}_{a_1,b_1}$ is $2$ and it is achieved by the following observables 
\begin{eqnarray}\label{GHZObs}
A_{0}=\frac{X+Z}{\sqrt{2}},\quad A_{1}= \frac{X-Z}{\sqrt{2}}, \quad B_{0}=Z,\quad B_{1}=X 
\end{eqnarray}
as well as the maximally entangled states 
\begin{equation}\label{GHZvecs}
\ket{\phi_{a_1,b_1}}=\frac{1}{\sqrt{2}}(\ket{a_1 b_1}+(-1)^{a_{1}}|a_1^{\perp}b_1^{\perp}\rangle),
\end{equation}
where $a_1^{\perp}=1-a_1,b_1^{\perp}=1-b_1$. Notice that $\ket{\phi_{0,0}}=\ket{\phi^+}, \ket{\phi_{1,0}}=-\ket{\psi^-},\ket{\phi_{0,1}}=\ket{\psi^+},\ket{\phi_{1,1}}=-\ket{\phi^-}$.
\end{fakt}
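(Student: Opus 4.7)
The plan is to prove Fact 1 in two pieces: first derive the upper bound $\mathcal{B}_{a_1,b_1}\le 2$ for every quantum state and every pair of dichotomic observables via an SOS decomposition, and then verify by direct computation that the claimed states and observables saturate the bound. The SOS step is essentially a restatement of Tsirelson's bound for CHSH (note that $\sqrt{2}\,\mathcal{B}_{0,0}=\langle A_0B_0+A_0B_1+A_1B_1-A_1B_0\rangle$ is CHSH up to a relabelling), but written in a way that handles the four sign choices of $(a_1,b_1)$ uniformly and does not require the observables to be projective.

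For the upper bound I would introduce the signs $s_1=(-1)^{a_1}$ and $s_0=(-1)^{a_1+b_1}$ so that the Bell operator becomes $\widehat{\mathcal{B}}_{a_1,b_1}=s_1\tilde{A}_1\otimes B_1+s_0\tilde{A}_0\otimes B_0$. Then I would consider the manifestly non-negative operator
\begin{equation}
S=\tfrac{1}{2}\sum_{x=0,1}\bigl(s_x\tilde{A}_x\otimes\I-\I\otimes B_x\bigr)^{2}\ge 0.
\end{equation}
Since $\tilde A_x$ and $B_x$ act on disjoint factors they commute, so expanding each square gives
\begin{equation}
S=\tfrac{1}{2}(\tilde{A}_0^{2}+\tilde{A}_1^{2})\otimes\I+\tfrac{1}{2}\,\I\otimes(B_0^{2}+B_1^{2})-\widehat{\mathcal{B}}_{a_1,b_1}.
\end{equation}
The cross terms $A_0A_1+A_1A_0$ cancel in the sum $\tilde{A}_0^{2}+\tilde{A}_1^{2}=A_0^{2}+A_1^{2}$, and since Alice's and Bob's observables have operator norm at most $1$, one gets $A_0^{2}+A_1^{2}\le 2\I$ and $B_0^{2}+B_1^{2}\le 2\I$. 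Hence $S\le 2\I-\widehat{\mathcal{B}}_{a_1,b_1}$, and taking expectation values with any quantum state gives $\mathcal{B}_{a_1,b_1}\le 2$ for all $a_1,b_1$.

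For achievability, I would substitute the reference observables \eqref{GHZObs} into the definitions \eqref{overA} to find the clean identifications $\tilde{A}_0=Z$ and $\tilde{A}_1=X$, so that $\widehat{\mathcal{B}}_{a_1,b_1}=(-1)^{a_1}\bigl[X\!\otimes\! X+(-1)^{b_1}Z\!\otimes\! Z\bigr]$. Then I would check directly that the four states $\ket{\phi_{a_1,b_1}}$ in \eqref{GHZvecs} are simultaneous eigenvectors of $X\otimes X$ and $Z\otimes Z$: using $(-1)^{a_1^{\perp}+b_1^{\perp}}=(-1)^{a_1+b_1}$ one finds $X\!\otimes\!X\ket{\phi_{a_1,b_1}}=(-1)^{a_1}\ket{\phi_{a_1,b_1}}$ and $Z\!\otimes\! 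Z\ket{\phi_{a_1,b_1}}=(-1)^{a_1+b_1}\ket{\phi_{a_1,b_1}}$. Plugging these eigenvalues in gives $\mathcal{B}_{a_1,b_1}=(-1)^{a_1}[(-1)^{a_1}+(-1)^{b_1}(-1)^{a_1+b_1}]=1+1=2$ for every $(a_1,b_1)$, saturating the upper bound.

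There is no real obstacle to the argument; the only point that requires minor care is bookkeeping the signs $(-1)^{a_1}$ and $(-1)^{b_1}$ so that a single SOS with sign-flipped copies of $\tilde A_0,\tilde A_1$ covers all four inequalities at once, rather than redoing the analysis case-by-case. I would also briefly note that the SOS argument never used $A_i^2=\I$ or projectivity of the observables, so the bound $\beta_Q=2$ holds for arbitrary dichotomic observables, which is consistent with the non-projective treatment invoked in the companion self-testing theorem.
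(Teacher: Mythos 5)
Your proof is correct and follows essentially the same route as the paper: your operator $S$ is exactly one half of the paper's SOS decomposition $2\left[2\I-\hat{\mathcal{B}}_{a_1,b_1}\right]\geq P_{a_1}^2+Q_{a_1,b_1}^2$ with $P_{a_1}=(-1)^{a_1}\tilde{A}_{1}-B_{1}$ and $Q_{a_1,b_1}=(-1)^{a_1+b_1}\tilde{A}_{0}-B_{0}$, followed by the same direct verification that the reference states and observables attain the value $2$. If anything, your justification of the operator bound via the cross-term cancellation $\tilde{A}_{0}^2+\tilde{A}_{1}^2=A_{0}^2+A_{1}^2\leq 2\I$ is slightly more careful than the paper's appeal to $\tilde{A}_{i}^2\leq\I$, which need not hold for each $i$ separately for general dichotomic observables.
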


\begin{proof}
The following proof is inspired by Ref. \cite{sarkar2023universal}. The Bell operators $\hat{\mathcal{B}}_{a_1,b_1}$ corresponding to each of the Bell expressions $\mathcal{B}_{a_1,b_1}$ \eqref{BE1N} admit a sum-of-squares (SOS) decomposition of the following form
\begin{eqnarray}\label{SOS1}
   2\left[2\I-\mathcal{\hat{B}}_{a_1,b_1}\right]\geq P_{a_1}^2+Q_{a_1,b_1}^2, 
\end{eqnarray}
where 

\begin{equation}\label{SOS2}
    P_{a_1}=(-1)^{a_1}\tilde{A}_{1}- B_{1},\qquad
   Q_{a_1,b_1}=(-1)^{a_1+b_1}\tilde{A}_{0}-B_{0}.
\end{equation}
Let us verify the above decomposition \eqref{SOS1} by expanding the terms on the right-hand side Eq.  \eqref{SOS1} which gives us
\begin{eqnarray}\label{SOS11}
 \left(\tilde{A}_{1}^2+\tilde{A}_{0}^2+B_{1}^2+B_0^2\right)-2\ \mathcal{\hat{B}}_{l}\leqslant \  2\left[2\ \I-\ \mathcal{\hat{B}}_{l}\right],
\end{eqnarray}
where we used the fact that $\tilde{A}_{i}^2\leqslant \I$. Consequently, we have from Eq. \eqref{SOS11} that $2$ is an upper bound on $\mathcal{B}_{a_1,b_1}$ in quantum theory. Now, it is easy to verify that the Bell inequalities \eqref{BE1N} attain the value $2$ when the states shared among the parties are the maximally entangled states $\ket{\phi_{a_1,b_1}}$ \eqref{GHZvecs} and the observables given in Eq. (\ref{GHZObs}). Thus, the Tsirelson's bound of the Bell inequalities \eqref{BE1N} for any $a_1,b_1$ is $2$.
\end{proof}

An important observation from the above SOS decomposition is that when the maximal violation of the above Bell inequalities is attained, the right-hand side of Eq. \eqref{SOS1} is $0$. Thus, any state $\ket{\psi}$ and observables $A_{i},B_i$ that attain the maximal violation of the Bell inequalities \eqref{BE1N} for any $l$ is given by
    \begin{equation}\label{SOSrel1}
          (-1)^{a_1}\tilde{A}_{1}\ket{\psi}= B_{1}\ket{\psi},\qquad
   (-1)^{a_1+b_1}\tilde{A}_{0}\ket{\psi}=B_{0}\ket{\psi},
    \end{equation}
Furthermore, one can also obtain from Eq. \eqref{SOS11} that the observables that attain the maximum violation of the Bell inequalities \eqref{BE1N} must be unitary, that is, $A_{i}^2=B_i^2=\I$. These relations will be particularly useful for self-testing.

Along with the above statistics, one also needs to observe when $a_1=b_1=0$ and $x_1=1,y_1=1$ that the post-interaction state satisfies
\begin{eqnarray}\label{extrastat}
    \langle\I\otimes B_{1}\rangle= -\langle \tilde{A}_{0}\otimes\I\rangle=1.
\end{eqnarray}

\subsection{Model-independent certification}
\setcounter{thm}{0}
\begin{thm}\label{theorem1}
Assume that the Bell inequality $\mathcal{B}_{0,0}$ \eqref{BE1N} is maximally violated at $t_1$. Furthermore, when Alice and Bob obtain the outcome $a_1,b_1$ with inputs $x_1=y_1=0$, then the Bell inequalities $\mathcal{B}_{a_1,b_1}$ \eqref{BE1N} for any $a_1,b_1$ are maximally violated at $t_2$. Along with it, when Alice and Bob observe the outcomes $a_1=b_1=0$ with inputs $x_1=1,y_1=1$, then the condition \eqref{extrastat} is also satisfied. The source $P$ prepares the state $\rho_{AB}$ acting on $\mathcal{H}_{A(t_1)}\otimes\mathcal{H}_{B(t_1)}$ and the interaction between the particles is represented by the unitary $V:\mathcal{H}_{A(t_1)}\otimes\mathcal{H}_{B(t_1)}\rightarrow\mathcal{H}_{A(t_2)}\otimes\mathcal{H}_{B(t_2)}$. Furthermore, the local observables of Alice and Bob are given by $A_{i},B_i\ (i=0,1)$ that acts on $\mathcal{H}_{A},\mathcal{H}_{B}$ respectively where $\mathcal{H}_{A}\equiv\mathcal{H}_{A(t_1)}\cup\mathcal{H}_{A(t_2)},\mathcal{H}_{B}\equiv\mathcal{H}_{B(t_1)}\cup\mathcal{H}_{B(t_2)}$. Then, there exist local unitary transformations $U_{A(t_i)}:\mathcal{H}_{A(t_i)}\rightarrow(\mathbb{C}^2)_{A'(t_i)}\otimes\mathcal{H}_{A''(t_i)}$ and $U_{B(t_i)}:\mathcal{H}_{B(t_i)}\rightarrow(\mathbb{C}^2)_{B'(t_i)}\otimes \mathcal{H}_{B''(t_i)}$ for $i=1,2$ 
such that:
\begin{enumerate}
    \item  The state $\rho_{AB}$ is certified as
\begin{eqnarray}\label{A10}
U_{A(t_1)}\otimes U_{B(t_1)}\rho_{AB}U_{A(t_1)}^{\dagger}\otimes U_{B(t_1)}^{\dagger}=\proj{\phi^+}_{A'(t_1)B'(t_1)}\otimes\xi_{A''(t_1)B''(t_1)}.
\end{eqnarray}
\item The observables of Alice and Bob are certified as
\begin{eqnarray}\label{mea1}
U_{A(t_i)}\overline{A}_{0(t_i)}\,U_{A(t_i)}^{\dagger}&=&\left(\frac{X+Z}{\sqrt{2}}\right)_{A'(t_i)}\otimes\I_{A''(t_i)}, \qquad U_{A(t_i)}\overline{A}_{1(t_i)}\,U_{A(t_i)}^{\dagger}=\left(\frac{X-Z}{\sqrt{2}}\right)_{A'(t_i)}\otimes\I_{A''(t_i)},\nonumber\\
U_{B(t_i)}\overline{B}_{0(t_i)}\,U_{B(t_i)}^{\dagger}&=&Z_{B'(t_i)}\otimes\I_{B''(t_i)},\qquad U_{B(t_i)}\overline{B}_{1(t_i)}\,U_{B(t_i)}^{\dagger}=X_{B'(t_i)}\otimes\I_{B''(t_i)}
\end{eqnarray}
where $\overline{A}_{j(t_i)}=\Pi^A_{i}A_{j}\Pi^A_{i}$ and  $\overline{B}_{j(t_i)}=\Pi^B_{i}B_{j}\Pi^B_{i}$ such that $\Pi^A_{i}/\Pi^B_{i}$ denotes the projection onto the Hilbert space $\mathcal{H}_{A(t_i)}/\mathcal{H}_{B(t_i)}$.

\item The unitary $V$ is certified as $U_{A(t_2)}\otimes U_{B(t_2)}\ V\ U_{A(t_1)}^{\dagger}\otimes U_{B(t_1)}^{\dagger}=\mathcal{U}\otimes V_0$ where $\mathcal{U}$ is given in Eq. \eqref{U2} and $V_0$ is a unitary matrix mapping $\mathcal{H}_{A''(t_1)}\otimes\mathcal{H}_{B''(t_1)}$ to $\mathcal{H}_{A''(t_2)}\otimes\mathcal{H}_{B''(t_2)}$.
\end{enumerate}
\end{thm}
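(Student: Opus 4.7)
My plan is to split the proof into three stages: (i) self-test the source state $\rho_{AB}$ and the observables $A_m, B_l$ at time $t_1$; (ii) self-test the post-interaction states $\sigma_{a_1,b_1,0,0}$ and the observables at time $t_2$; (iii) combine (i) and (ii) to fix the action of $V$ on a natural basis, then invoke the extra condition \eqref{extrastat} to eliminate the remaining auxiliary freedom and obtain the factored form.

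For stage (i), the key input is the SOS decomposition \eqref{SOS1} with $a_1 = b_1 = 0$. Maximal violation of $\mathcal{B}_{0,0}$ implies $\tilde{A}_1\ket{\psi} = B_1\ket{\psi}$ and $\tilde{A}_0\ket{\psi} = B_0\ket{\psi}$ on any purification $\ket{\psi}$ of $\rho_{AB}$, together with $\tilde{A}_i^2 = B_j^2 = \I$ on the local support. Following the self-testing recipe of \cite{sarkar2023universal}, these relations produce local unitaries $U_{A(t_1)}, U_{B(t_1)}$ that establish \eqref{A10} and the $i = 1$ case of \eqref{mea1}. A useful corollary is that $A_m, B_l$ act projectively on the local support, so for $x_1 = y_1 = 0$ and outcomes $a_1, b_1$ the (normalized) post-measurement state, up to the same $U_{A(t_1)}\otimes U_{B(t_1)}$, is $\ket{\overline{a}_1}\ket{b_1}\otimes\xi$, where $\ket{\overline{0}}, \ket{\overline{1}}$ are the eigenvectors of $(X+Z)/\sqrt{2}$. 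Stage (ii) is completely parallel: each $\hat{\mathcal{B}}_{a_1,b_1}$ admits the same SOS structure as in \eqref{SOS1}, so maximal violation at $t_2$ certifies $\sigma_{a_1,b_1,0,0}$ as $\ket{\phi_{a_1,b_1}}\!\bra{\phi_{a_1,b_1}}\otimes\xi^{(a_1,b_1)}$ and yields the $i = 2$ case of \eqref{mea1}; the $t_2$ local unitaries $U_{A(t_2)}, U_{B(t_2)}$ can be chosen common across all four values of $(a_1,b_1)$ since the certified observables are the same in every sub-experiment.

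For stage (iii), set $W := (U_{A(t_2)}\otimes U_{B(t_2)})\,V\,(U_{A(t_1)}\otimes U_{B(t_1)})^\dagger$ and diagonalize $\xi = \sum_k p_k\,\ket{\xi_k}\!\bra{\xi_k}$. Stages (i) and (ii) together force, for each $k$ and each $(a_1,b_1)$,
\begin{equation*}
W\bigl(\ket{\overline{a}_1}\ket{b_1}\otimes\ket{\xi_k}\bigr) = \ket{\phi_{a_1,b_1}}\otimes\ket{\eta_{a_1,b_1,k}}
\end{equation*}
for some unit vector $\ket{\eta_{a_1,b_1,k}}$ that is a priori allowed to depend on $(a_1,b_1)$. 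To remove this dependence I invoke \eqref{extrastat}. The post-measurement state for $x_1 = y_1 = 1$, $a_1 = b_1 = 0$ is, on the local support, $\ket{\widetilde{0}}\ket{+}$ with $\ket{\widetilde{0}}$ the $+1$ eigenvector of $(X-Z)/\sqrt{2}$, and a short calculation gives $\ket{\widetilde{0}}\ket{+} = \tfrac{1}{2}\sum_{a_1,b_1}\ket{\overline{a}_1}\ket{b_1}$ together with $\ket{1}\ket{+} = \tfrac{1}{2}\sum_{a_1,b_1}\ket{\phi_{a_1,b_1}}$. Since \eqref{extrastat} forces the local reduction of the post-interaction state to be the pure product $\ket{1}\ket{+}$, each component in the mixture over $k$ must itself be $\ket{1}\ket{+}$ on the local factor, i.e. $W(\ket{\widetilde{0}}\ket{+}\otimes\ket{\xi_k}) = \ket{1}\ket{+}\otimes\ket{\mu_k}$ for some $\ket{\mu_k}$. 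Expanding the left-hand side by linearity of $W$ and matching Bell-basis coefficients yields $\ket{\eta_{a_1,b_1,k}} = \ket{\mu_k}$ for every $(a_1,b_1)$. Setting $V_0\ket{\xi_k} := \ket{\mu_k}$ defines an isometry (orthonormality of the image follows from unitarity of $V$) which extends to a unitary on the auxiliary space, and then $W = \mathcal{U}\otimes V_0$ on the relevant subspace, which is the claimed certification of $V$.

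The main obstacle is precisely the coherence step in stage (iii): the self-tests at $t_2$ pin down each branch $(a_1,b_1)$ independently but leave a potentially branch-dependent auxiliary vector $\ket{\eta_{a_1,b_1,k}}$, so without further input the interaction is only determined up to an apparent four-parameter family on the auxiliary. The extra statistical condition \eqref{extrastat} functions as a cross-branch coherence test: by preparing an equal superposition of the four pre-interaction branches and demanding that the post-interaction local state is pure, it forces the four auxiliary vectors to coincide, which is exactly the ingredient needed for the product factorization $\mathcal{U}\otimes V_0$.
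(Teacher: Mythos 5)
Your proposal follows the same two-stage strategy as the paper: self-test the state and observables at $t_1$ and the post-interaction states and observables at $t_2$ from the SOS decomposition \eqref{SOS1} (both steps imported from \cite{sarkar2023universal}), then use the extra condition \eqref{extrastat} to remove the branch dependence of the auxiliary action of $V$. Your stage (iii) differs only in bookkeeping: instead of expanding $V$ in blocks as in \eqref{Vgen} and deducing $V_{00}=V_{01}=V_{10}=V_{11}$ from the Cauchy--Schwarz relations $\I\otimes B_1\ket{\sigma_{0,0,1,1}}=\ket{\sigma_{0,0,1,1}}$ and $\tilde{A}_{0}\otimes\I\ket{\sigma_{0,0,1,1}}=-\ket{\sigma_{0,0,1,1}}$, you act with $W$ on the eigenvectors of $\xi$, note that \eqref{extrastat} forces the $A'(t_2)B'(t_2)$ marginal of every pure component of $\sigma_{0,0,1,1}$ to be $\ket{1}\ket{+}$, and match Bell-basis coefficients. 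This is equivalent (your spectral decomposition of $\xi$ plays the role of the paper's purification-plus-full-rank reduction), the two basis identities you use are correct, and like the paper you need both halves of \eqref{extrastat}.

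The one step you under-justify is the assertion that the $t_2$ local unitaries ``can be chosen common across all four values of $(a_1,b_1)$ since the certified observables are the same in every sub-experiment.'' Maximal violation of $\mathcal{B}_{a_1,b_1}$ certifies anticommutation of the projected observables only on the local supports of $\sigma_{a_1,b_1,0,0}$, and these four supports are in general different subspaces; the same operators restricted to different subspaces could a priori require different local rotations, so ``same observables'' is not by itself an argument. The paper closes this by invoking Lemma 2 of \cite{sarkar2023self} to lift branch-wise anticommutation to anticommutation on the union projectors $\Pi^s_2=\cup_{a_1,b_1}\Pi_s^{(a_1,b_1)}$, after which a single $U_{s(t_2)}$ realizing \eqref{mea1} for $i=2$ exists. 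You need this (or an equivalent union-of-supports argument) before you can even define a single $W$, and also before applying the certified forms of $B_1$ and $\tilde{A}_0$ to $\sigma_{0,0,1,1}$, whose support lies in the span of the four branch supports rather than in any single one.
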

\begin{proof}
The proof is divided into two major parts. In the first part, we certify the local observables of Alice and Bob along with the states prepared by the source $P$ and the post-interaction states. Then in the second part, using the certified states and observables we certify the unitary $V$.
    \begin{center}
        \textbf{Local observables and the state prepared by the source $P$}
    \end{center}
Consider the scenario presented in Fig. 1 of the manuscript at time $t_1$, that is, before the interaction. Now, the quantum state $\rho_{AB}$ acts on $\mathcal{H}_{A(t_1)}\otimes\mathcal{H}_{B(t_1)}$ and thus we consider the projection of the observables $A_{j}, B_{j}$ onto the support of the state as $\overline{A}_{j(t_1)}, \overline{B}_{j(t_1)}$ where
\begin{eqnarray}
    \overline{A}_{j(t_1)}=\Pi^A_{1}A_{j}\Pi^A_{1},\qquad \overline{B}_{j(t_1)}=\Pi^B_{1}B_{j}\Pi^B_{1}\qquad (j=0,1)
\end{eqnarray}
such that $\Pi^A_{1},\Pi^B_{1}$ denote the projectors onto $\mathcal{H}_{A(t_1)},\mathcal{H}_{B(t_1)}$ respectively.

Following \cite{sarkar2023universal} for $N=2$ and $a_1=b_1=0$, we obtain from Eqs. \eqref{SOSrel1} that
\begin{eqnarray}
    \left\{\overline{{A}}_{0,1},\overline{{A}}_{1,1}\right\}=0,\qquad \left\{\overline{{B}}_{0,1},\overline{{B}}_{1,1}\right\}=0.
\end{eqnarray}
Consequently, there exists a unitary transformation $U_{s(t_1)}:\mathcal{H}_{s(t_1)}\rightarrow(\mathbb{C}^2)_{s'(t_1)}\otimes\mathcal{H}_{s''(t_1)}$ $(s=A,B)$ to obtain \eqref{mea1} for $i=1$. Utilising now the relation \eqref{SOSrel1} for $a_1=b_1=0$ and following exactly from \cite{sarkar2023universal}, we obtain that
\begin{eqnarray}
   U_{A(t_1)}\otimes U_{B(t_1)}\rho_{AB}U_{A(t_1)}^{\dagger}\otimes U_{B(t_1)}^{\dagger}=\proj{\phi^+}_{A'(t_1)B'(t_1)}\otimes\xi_{A''(t_1)B''(t_1)}.
\end{eqnarray}

Now as depicted in Fig. 1 of the manuscript, the post-measurement states, $\rho'_{a_1,b_1,x_1,y_1}$, of Alice and Bob undergo a joint unitary transformation $V:\mathcal{H}_{A(t_1)}\otimes\mathcal{H}_{B(t_1)}\rightarrow\mathcal{H}_{A(t_2)}\otimes\mathcal{H}_{B(t_2)}$ to get the post-interaction states, $\sigma_{a_1,b_1,x_1,y_1}$.
As the measurements and states at time $t_1$ are self-tested, 
the post-measurement states, $\rho'_{a_1,b_1,x_1,y_1}$ are given by
\begin{eqnarray}\label{pre-int state}
U_{A(t_1)}\otimes U_{B(t_1)}\rho'_{a_1,b_1,x_1,y_1}U_{A(t_1)}^{\dagger}\otimes U_{B(t_1)}^{\dagger}=\proj{e_{a_1,x_1}}_{A'(t_1)}\otimes\proj{e_{b_1,y_1}}_{B'(t_1)}\otimes\xi_{A''(t_1)B''(t_1)}
\end{eqnarray}
where $\ket{e_{a_1/b_1,x_1/y_1}}$ is the eigenvector corresponding to the $a_1/b_1$ outcome of $x_1/y_1$ reference measurement.

Consider now the scenario presented in Fig. 1 of the manuscript at time $t_2$, that is, after the interaction.
The post-interaction states $\sigma_{a_1,b_1,0,0}$ maximally violate the Bell inequalities \eqref{BE1N} for any $a_1,b_1$. Consequently, they are also certified from the maximal violation of the Bell inequalities \eqref{BE1N}. For this purpose, we again utilise the relation \eqref{SOSrel1} for all $a_1,b_1$ and following exactly from \cite{sarkar2023universal}, we first obtain that
\begin{eqnarray}
    \left\{\mathbb{A}_{0}^{(a_1,b_1)},\mathbb{A}_{1}^{(a_1,b_1)}\right\}=0,\qquad \left\{\mathbb{B}_{0}^{(a_1,b_1)},\mathbb{B}_{1}^{(a_1,b_1)}\right\}=0\qquad \forall a_1,b_1
\end{eqnarray}
where $\mathbb{A}_{j}^{(a_1,b_1)}=\Pi_{A}^{(a_1,b_1)}A_j\Pi_{A}^{(a_1,b_1)}$ and 
$\mathbb{B}_{j}^{(a_1,b_1)}=\Pi_{B}^{(a_1,b_1)}B_j\Pi_{B}^{(a_1,b_1)}$ for $j=0,1$ such that $\Pi_{s}^{(a_1,b_1)}$ is the projection onto the local supports of the states $\rho_{s}=\Tr_{A,B/s}(\sigma_{a_1,b_1,0,0})$ for $s=A,B$. Now, utilizing Lemma 2 of \cite{sarkar2023self}, we obtain that the local observables acting on the support of all states, the corresponding projector is given by $\Pi_2^s=\cup_{a_1,b_1}\Pi_{s}^{(a_1,b_1)}$, also anti-commute 
\begin{eqnarray}
    \{\overline{A}_{0(t_2)},\overline{A}_{1(t_2)}\}=0,\qquad\{\overline{B}_{0(t_2)},\overline{B}_{0(t_2)}\}=0
\end{eqnarray}
such that $\overline{A}_{j(t_2)}=\Pi_2^AA_{j}\Pi_2^A$ and $\overline{B}_{j(t_2)}=\Pi_2^BB_{j}\Pi_2^B$. Consequently, one can straightaway conclude that there exists unitaries $U_{s(t_2)}:\mathcal{H}_{s(t_2)}\rightarrow(\mathbb{C}^2)_{s'(t_2)}\otimes\mathcal{H}_{s''(t_2)}$ $(s=A,B)$ to obtain \eqref{mea1} for $i=2$. Furthermore from \cite{sarkar2023universal}, the post-interaction states are certified using the relations \eqref{SOSrel1} for all $a_1,b_1$ using the certified observables \eqref{mea1} at time $t_2$ as
%

\begin{eqnarray}\label{post-int state}
   U_{A(t_2)}\otimes U_{B(t_2)}\sigma_{a_1,b_1,0,0}U_{A(t_2)}^{\dagger}\otimes U_{B(t_2)}^{\dagger}=\proj{\phi_{a_1,b_1}}_{A'(t_2)B'(t_2)}\otimes\xi^{a_1,b_1}_{A''(t_2)B''(t_2)}
\end{eqnarray}
where $\ket{\phi_{a_1,b_1}}$ are given in \eqref{GHZvecs}. Let us now utilize the certified pre-interaction states \eqref{pre-int state} and post-interaction ones \eqref{post-int state} to self-test the unitary $V$.
   \begin{center}
        \textbf{The unitary $V$}
    \end{center}
As shown above, the Hilbert spaces $\mathcal{H}_{s(t_i)}$ decompose as $(\mathbb{C}^2)_{s'(t_i)}\otimes\mathcal{H}_{s''(t_i)}$ for $s=A,B$ and $i=0,1$. Thus, we can express $V$ as
\begin{eqnarray}\label{Vgen}
   U_{A(t_2)}\otimes U_{B(t_2)}\ V\ U_{A(t_1)}^{\dagger}\otimes U_{B(t_1)}^{\dagger}=\sum_{i,i',j,j'}\ket{\overline{i}_{A'(t_1)}\ j_{B'(t_1)}}\bra{\overline{i}'_{A'(t_2)}\ j'_{B'(t_2)}}\otimes (V_{i,i'j,j'})_{A''(t_1)B''(t_1)A''(t_2)A''(t_2)}
\end{eqnarray}
where $\ket{\overline{i}}$ are the eigenvectors of $(X+Z)/\sqrt{2}$ as given in \eqref{X+Z}. For simplicity, we will drop the lower indices from Eq. \eqref{Vgen}.

Now, suppose when $x_1,y_1=0$ and $a_1=b_1=0$, then as described above the Bell inequality $\mathcal{B}_{0,0}$ is maximally violated. Thus from the relation $V\rho'_{a_1,b_1,x_1,y_1}V^{\dagger}=\sigma_{a_1,b_1,x_1,y_1}$, we can purify the states on the left and right sides by adding ancillary systems $E$ to obtain that 
\begin{eqnarray}
V\otimes\I_E\ket{\rho'_{0,0,0,0}}_{A(t_1)B(t_1)E}=\ket{\sigma_{0,0,0,0}}_{A(t_2)B(t_2)E}
\end{eqnarray}
such that $\Tr_E (\proj{\rho'_{0,0,0,0}})=\rho'_{0,0,0,0}$ and $\Tr_E (\proj{\sigma_{0,0,0,0}})=\sigma_{0,0,0,0}$.
Now, using the certified states $\rho'_{0,0,0,0}$ and $\sigma_{0,0,0,0}$ from Eqs. \eqref{pre-int state} and \eqref{post-int state} respectively and then expanding $V$ using \eqref{Vgen}, we obtain
\begin{eqnarray}
    \sum_{i,j}\ket{\overline{i}\ j} \left(V_{i,0,j,0}\otimes\I_E\ket{\xi}\right)=\ket{\phi_{00}}\otimes\ket{\xi^{0,0}}.
\end{eqnarray}
Multiplying by $\bra{\overline{i}\ j}$ for all $i,j$ on both sides of the above formula gives us the following four conditions
\begin{eqnarray}
\sqrt{2}\ V_{0,0,0,0}\otimes\I_E\ket{\xi}&=&\cos(\pi/8)\ket{\xi^{0,0}},\qquad \sqrt{2}\ V_{1,0,0,0}\otimes\I_E\ket{\xi}=-\sin(\pi/8)\ket{\xi^{0,0}}\nonumber\\
\sqrt{2}\ V_{0,0,1,0}\otimes\I_E\ket{\xi}&=&\sin(\pi/8)\ket{\xi^{0,0}},\qquad \sqrt{2}\ V_{1,0,1,0}\otimes\I_E\ket{\xi}=\cos(\pi/8)\ket{\xi^{0,0}}.
\end{eqnarray}
Thus, from the above conditions we can conclude that
\begin{eqnarray}
    \frac{1}{\cos(\pi/8)}V_{0,0,0,0}\otimes\I_E\ket{\xi}=-\frac{1}{\sin(\pi/8)}V_{1,0,0,0}\otimes\I_E\ket{\xi}= \frac{1}{\sin(\pi/8)}V_{0,0,1,0}\otimes\I_E\ket{\xi}\nonumber\\=\frac{1}{\cos(\pi/8)}V_{1,0,1,0}\otimes\I_E\ket{\xi}.
\end{eqnarray}
Now, without loss of generality, one can consider that $\xi$ is full-rank as the unitary $V$ can be characterised only on the support of the states $\rho'$. Thus, from the above formula taking partial trace over $E$ gives us
\begin{eqnarray}
     \frac{\sqrt{2}\ V_{0,0,0,0}}{\cos(\pi/8)}=\frac{-\sqrt{2}\ V_{1,0,0,0}}{\sin(\pi/8)}= \frac{\sqrt{2}\ V_{0,0,1,0}}{\sin(\pi/8)}=\frac{\sqrt{2}\ V_{1,0,1,0}}{\cos(\pi/8)}=V_{0,0}.
\end{eqnarray}

Similarly, considering the other outputs $a_1,b_1$ and $x_1=y_1=0$ and recalling that 
\begin{eqnarray}\label{36}
V\otimes\I_E\ket{\rho'_{a_1,b_1,0,0}}_{A(t_1)B(t_1)E}=\ket{\sigma_{a_1,b_1,0,0}}_{A(t_2)B(t_2)E}
\end{eqnarray}
where the states $\ket{\rho'_{a_1,b_1,0,0}}$ and $\ket{\sigma_{a_1,b_1,0,0}}$ are certified in Eqs. \eqref{pre-int state} and \eqref{post-int state} respectively. Consequently, from the general expression of the unitary $V$ \eqref{Vgen} and the above condition \eqref{36}, we obtain that
\begin{eqnarray}\label{Vab}
     \sum_{i,j}\ket{\overline{i}\ j} \left(V_{i,a_1,j,b_1}\otimes\I_E\ket{\xi}\right)=\ket{\phi_{a_1,b_1}}\otimes\ket{\xi^{a_1,b_1}}.
\end{eqnarray}
Multiplying by $\bra{\overline{i}\ j}$ for all $i,j$ on both sides of the above formula gives us the following conditions for any $a_1,b_1$
\begin{eqnarray}
V_{i,a_1,j,b_1}\otimes\I_E\ket{\xi}&=&\bra{\overline{i}\ j}\phi_{a_1,b_1}\rangle\ket{\xi^{a_1,b_1}},\qquad i,j=0,1
\end{eqnarray}
Thus, from the above conditions we can conclude that
\begin{eqnarray}
    \frac{1}{\bra{\overline{0}0}\phi_{a_1,b_1}\rangle}V_{0,a_1,0,b_1}\otimes\I_E\ket{\xi}= \frac{1}{\bra{\overline{0}1}\phi_{a_1,b_1}\rangle}V_{0,a_1,1,b_1}\otimes\I_E\ket{\xi}=  \frac{1}{\bra{\overline{1}0}\phi_{a_1,b_1}\rangle}V_{1,a_1,0,b_1}\otimes\I_E\ket{\xi}\nonumber\\= \frac{1}{\bra{\overline{1}1}\phi_{a_1,b_1}\rangle}V_{1,a_1,1,b_1}\otimes\I_E\ket{\xi}.
\end{eqnarray}
Again considering that $\xi$ is full-rank, from the above formula taking partial trace over $E$ gives us
\begin{eqnarray}
      \frac{V_{0,a_1,0,b_1}}{\bra{\overline{0}0}\phi_{a_1,b_1}\rangle}= \frac{V_{0,a_1,1,b_1}}{\bra{\overline{0}1}\phi_{a_1,b_1}\rangle}=  \frac{V_{1,a_1,0,b_1}}{\bra{\overline{1}0}\phi_{a_1,b_1}\rangle}= \frac{V_{1,a_1,1,b_1}}{\bra{\overline{1}1}\phi_{a_1,b_1}\rangle}=V_{a_1,b_1}.
\end{eqnarray}
Now, using the above condition in Eq. \eqref{Vgen} gives us 
\begin{eqnarray}\label{Vgen1}
    U_{A(t_2)}\otimes U_{B(t_2)}\ V\ U_{A(t_1)}^{\dagger}\otimes U_{B(t_1)}^{\dagger}&=&\sum_{i,a_1,j,b_1}\ket{\overline{i}\ j}\bra{\overline{a}_1\ b_1}\bra{\overline{i}\ j}\phi_{a_1,b_1}\rangle\otimes (V_{a_1,b_1})\nonumber\\
    &=&\sum_{i,a_1,j,b_1}\ket{\overline{i}\ j}\bra{\overline{i}\ j}\phi_{a_1,b_1}\rangle\bra{\overline{a}_1\ b_1}\otimes (V_{a_1,b_1})\nonumber\\&=&
    \ket{\phi_{0,0}}\!\bra{\overline{0}0}\otimes V_{00}+\ket{\phi_{1,0}}\!\bra{\overline{1}0}\otimes V_{10}+\ket{\phi_{0,1}}\!\bra{\overline{0}1}\otimes V_{01}+\ket{\phi_{1,1}}\!\bra{\overline{1}1}\otimes V_{11}
\end{eqnarray}
where we used the fact that$\sum_{ij}\ket{\overline{i}\ j}\bra{\overline{i}\ j}=\I$.


Now, we finally utilize the fact that when $a_1=b_1=0$ and $x_1=1,y_1=1$, then one satisfies the relation \eqref{extrastat}. Let us exploit this relation to show that $V_{00}=V_{10}=V_{01}=V_{11}$. The state $\sigma_{0,0,1,1}$ is given by
\begin{eqnarray}
U_{A(t_2)}\otimes U_{B(t_2)}  \ket{\sigma_{0,0,1,1}}= \left(U_{A(t_2)}\otimes U_{B(t_2)} V\otimes\I_E U_{A(t_1)}^{\dagger}\otimes U_{B(t_1)}^{\dagger}\right) U_{A(t_1)}\otimes U_{B(t_1)}\ket{\rho'_{0,0,1,1}}.
\end{eqnarray}
Using the fact that $\rho'_{0,0,1,1}$ is certified as in \eqref{pre-int state} and the form of $V$ from \eqref{Vgen1}, we obtain from the above formula that
\begin{eqnarray}
    U_{A(t_2)}\otimes U_{B(t_2)}  \ket{\sigma_{0,0,1,1}}=\frac{1}{2}\left(\ket{\phi_{0,0}}\otimes V_{00}+\ket{\phi_{1,0}}\otimes V_{10}+\ket{\phi_{0,1}}\otimes V_{01}+\ket{\phi_{1,1}}\otimes V_{11}\right)\otimes\I_E\ket{\xi}.
\end{eqnarray}
Consider now the relation $\langle \I\otimes B_1\rangle=1$, which using Cauchy-Schwarz inequality can be written as 
\begin{eqnarray}
    \I\otimes B_1\ket{\sigma_{0,0,1,1}}_{A(t_2)B(t_2)E}=\ket{\sigma_{0,0,1,1}}_{A(t_2)B(t_2)E}.
\end{eqnarray}
Notice that $\sigma_{0,0,1,1}$ acts on $\mathcal{H}_{A(t_2)}\otimes\mathcal{H}_{B(t_2)}$ and $B_1$ is certified on it as in Eq. \eqref{mea1} for $i=2$. Consequently, we  obtain from the above condition that
\begin{eqnarray}
  \big(\ket{\phi_{0,1}}\otimes V_{00}+\ket{\phi_{1,1}}\otimes V_{10}+\ket{\phi_{0,0}}\otimes V_{01}&+&\ket{\phi_{1,0}}\otimes V_{11}\big)\otimes\I_E\ket{\xi}\nonumber\\
  &=&\big(\ket{\phi_{0,0}}\otimes V_{00}+\ket{\phi_{1,0}}\otimes V_{10}+\ket{\phi_{0,1}}\otimes V_{01}+\ket{\phi_{1,1}}\otimes V_{11}\big)\otimes\I_E\ket{\xi}
\end{eqnarray}
which allows us to conclude that
\begin{eqnarray}
    V_{00}\otimes\I_E\ket{\xi}=V_{01}\otimes\I_E\ket{\xi}\qquad V_{10}\otimes\I_E\ket{\xi}=V_{11}\otimes\I_E\ket{\xi}.
\end{eqnarray}
Again considering $\xi$ to be full-rank, we obtain
\begin{eqnarray}\label{Vgen2}
    V_{00}=V_{01}\qquad V_{10}=V_{11}.
\end{eqnarray}
Now, utilising the relation $\langle \tilde{A_0}\otimes \I\rangle=-1$, which again using Cauchy-Schwarz inequality can be written as 
\begin{eqnarray}
\tilde{A_0}\otimes\I_{BE}\ket{\sigma_{0,0,1,1}}=-\ket{\sigma_{0,0,1,1}}
\end{eqnarray}
and then using the fact that $A_1,A_2$ is certified when acting on the Hilbert space $\mathcal{H}_{A(t_2)}\otimes\mathcal{H}_{B(t_2)}$ as in Eq. \eqref{stmea} which gives us
\begin{eqnarray}
  \big(\ket{\phi_{1,1}}\otimes V_{00}+\ket{\phi_{0,1}}\otimes V_{10}+\ket{\phi_{1,0}}\otimes V_{01}&+&\ket{\phi_{0,0}}\otimes V_{11}\big)\otimes\I_E\ket{\xi}\nonumber\\&=&\big(\ket{\phi_{0,0}}\otimes V_{00}+\ket{\phi_{1,0}}\otimes V_{10}+\ket{\phi_{0,1}}\otimes V_{01}+\ket{\phi_{1,1}}\otimes V_{11}\big)\otimes\I_E\ket{\xi}.
\end{eqnarray}
This allows to conclude that
\begin{eqnarray}\label{Vgen3}
    V_{00}=V_{11}\qquad V_{10}=V_{01}.
\end{eqnarray}
It is simple to observe from Eqs. \eqref{Vgen2} and \eqref{Vgen3} that $V_{00}=V_{10}=V_{01}=V_{11}=V_0$. Consequently, one can straightforwardly conclude from \eqref{Vgen1} that
\begin{eqnarray}
 U_{A(t_2)}\otimes U_{B(t_2)}\ V\ U_{A(t_1)}^{\dagger}\otimes U_{B(t_1)}^{\dagger}=\left(\ket{\phi_{0,0}}\!\bra{\overline{0}0}+\ket{\phi_{1,0}}\!\bra{\overline{1}0}+\ket{\phi_{0,1}}\!\bra{\overline{0}1}+\ket{\phi_{1,1}}\!\bra{\overline{1}1}\right)\otimes V_{0}.
\end{eqnarray}
This completes the proof.
\end{proof}

\section{Appendix B: Model-independent inference of $N$-system entangling quantum interaction}

\subsection{Reference statistics}
Consider the following reference interaction that acts between the $N$ incoming particles given by the unitary
\begin{eqnarray}\label{Nuni}
\mathcal{U_N}=\sum_{i_1,i_2,\ldots i_N=0,1}\ket{\phi_{i_1i_2\ldots i_N}}\!\bra{\overline{i}_1i_2i_{2}^x\ldots i_{N}^x}
\end{eqnarray}
where, the $\{\ket{\overline{0}},\ket{\overline{1}}\}$ are the eigenvectors of $(X+Z)/\sqrt{2}$ given Eq. \eqref{X+Z} and $\ket{0^x}=\ket{+}=(\ket{0}+\ket{1})/\sqrt{2},\ket{1^x}=\ket{-}=(\ket{0}-\ket{1})/\sqrt{2}$ with 
\begin{equation}
\ket{\phi_{l_1l_2\ldots l_N}}=\frac{1}{\sqrt{2}}(\ket{l_1\ldots l_N}+(-1)^{l_{1}}|l^{\perp}_1\ldots l^{\perp}_N\rangle),
\end{equation}
such that $l_1,l_2,\ldots,l_N=0,1$ and $l^{\perp}_i=1-l_i$. 
It is clear from the above unitary that for some systems initially in product states will become entangled if they interact via this unitary process. 

Consider now that the source prepares some state $\rho_{N}$ on which all parties measure at time $t_1$. Let us suppose that the parties choose the input $\mathbf{x_1}$ and obtain the outcomes $\mathbf{a_1}$ respectively. Now, the post-measurement states are denoted as $\rho'_{\mathbf{a_1},\mathbf{x_1}}$. Then, they interact via some Hamiltonian $H(t)$ for a time $\delta t$. Consequently, the states at time $t_1$ evolve via the unitary process $V$ \cite{Sakurai} and thus the states after the interaction is given by $V\rho'_{\mathbf{a_1},\mathbf{x_1}}V^{\dagger}=\sigma_{\mathbf{a_1},\mathbf{x_1}}$. 

Now, the states $\sigma_{\mathbf{a_1},\mathbf{x_1}}$ for any $\mathbf{a_1}$ and $x_{1,1}=x_{2,1}=0$ with $x_{n,1}=1\ (n=3,\ldots,N)$ need to maximally violate the following Bell inequalities
\begin{eqnarray}\label{BE1NN}
\mathcal{B}_{\mathbf{a_1}}=(-1)^{a_{1,1}} \left\langle(N-1)\tilde{A}_{1,1}\otimes\bigotimes_{n=2}^N A_{n,1} +\sum_{i=2}^N(-1)^{a_{n,1}}\tilde{A}_{1,0}\otimes A_{n,0}\right\rangle\leqslant  \beta_C
\end{eqnarray}
where,
\begin{eqnarray}\label{overAN}
    \tilde{A}_{1,0}=\frac{A_{0}-A_{1}}{\sqrt{2}},\qquad\tilde{A}_{1,1}=\frac{A_{0}+A_{1}}{\sqrt{2}}.
\end{eqnarray}
The classical bound of the above Bell inequalities is $\beta_C=\sqrt{2}(N-1)$ for any $a_1,b_1$. Furthermore, the  initial state $\rho_{N}$ should maximally violate the Bell inequality $\mathcal{B}_{0,0,\ldots,0}$.

\begin{fakt}The maximal quantum value of the Bell expression $\mathcal{B}_{\mathbf{a_1}}$ is $2(N-1)$ and it is achieved by the following observables 
\begin{eqnarray}\label{GHZObsN}
A_{1,0}=\frac{X+Z}{\sqrt{2}},\quad A_{1,1}= \frac{X-Z}{\sqrt{2}},\quad A_{n,0}=Z,\quad A_{n,1}=X, \quad\textrm{for }n=2,\ldots,N,
\end{eqnarray}
as well as the GHZ-like states 
\begin{equation}\label{GHZvecsN}
\ket{\phi_{\mathbf{a_1}}}=\frac{1}{\sqrt{2}}(\ket{a_{1,1}\ldots a_{N,1}}+(-1)^{a_{1,1}}\ket{a_{1,1}^{\perp}\ldots a_{N,1}^{\perp}}),
\end{equation}
where $a_{n,1}^{\perp}=1-a_n$. 
\end{fakt}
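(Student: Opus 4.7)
The plan is to adapt the sum-of-squares (SOS) argument used in the proof of Fact 1 to the $N$-party setting. The Bell operator $\hat{\mathcal{B}}_{\mathbf{a_1}}$ consists of one ``long'' correlator $(N-1)(-1)^{a_{1,1}}\,\tilde{A}_{1,1}\otimes\bigotimes_{n=2}^N A_{n,1}$ on all parties with coefficient $N-1$, plus $N-1$ ``short'' correlators of the form $(-1)^{a_{1,1}+a_{n,1}}\,\tilde{A}_{1,0}\otimes A_{n,0}$ each involving only parties $1$ and $n$. Since each correlator is individually bounded by $1$ on quantum states with norm-$\leq 1$ observables, the obvious operator bound would be $2(N-1)$, and the SOS must be tight enough to certify this as the actual Tsirelson bound.

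Concretely, I would introduce the operators
\begin{equation}
P_{\mathbf{a_1}} = (-1)^{a_{1,1}}\,\tilde{A}_{1,1}\otimes\bigotimes_{n=2}^N A_{n,1} - \I, \qquad Q_{n,\mathbf{a_1}} = (-1)^{a_{1,1}+a_{n,1}}\,\tilde{A}_{1,0}\otimes A_{n,0} - \I,
\end{equation}
where $Q_{n,\mathbf{a_1}}$ is padded by identity on the remaining parties. Squaring and collecting terms, I would verify the decomposition
\begin{equation}
2\bigl[2(N-1)\I - \hat{\mathcal{B}}_{\mathbf{a_1}}\bigr] \geq (N-1)\,P_{\mathbf{a_1}}^{\,2} + \sum_{n=2}^N Q_{n,\mathbf{a_1}}^{\,2},
\end{equation}
following exactly the logic of Eqs.~\eqref{SOS1}--\eqref{SOS11}. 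The cross terms reproduce $-2\hat{\mathcal{B}}_{\mathbf{a_1}}$; the $\I$ contributions give $(N-1)+ (N-1)=2(N-1)$; and the only nontrivial squaring uses $\tilde{A}_{1,0}^{\,2}+\tilde{A}_{1,1}^{\,2}\leq 2\I$ (arising from $A_{1,j}^{\,2}\leq \I$) combined with $A_{n,j}^{\,2}\leq \I$ for $n\geq 2$, which precisely absorbs the factor $(N-1)$ in front of $P_{\mathbf{a_1}}^{\,2}$.

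Attainment by the reference strategy is a short direct computation. With the observables \eqref{GHZObsN} one has $\tilde{A}_{1,1}=X$ and $\tilde{A}_{1,0}=Z$, so the ``long'' term becomes $X^{\otimes N}$ on $\ket{\phi_{\mathbf{a_1}}}$, which flips every bit and yields expectation $(-1)^{a_{1,1}}$; each ``short'' term becomes $Z_1 Z_n$, and since both $\ket{a_{1,1}\ldots a_{N,1}}$ and $\ket{a_{1,1}^\perp\ldots a_{N,1}^\perp}$ are eigenvectors of $Z_1Z_n$ with the same eigenvalue $(-1)^{a_{1,1}+a_{n,1}}$, it gives expectation $(-1)^{a_{1,1}+a_{n,1}}$. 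Plugging these into $\mathcal{B}_{\mathbf{a_1}}$, the factor $(-1)^{a_{1,1}}$ in front cancels the first sign and the $(-1)^{a_{n,1}}(-1)^{a_{1,1}+a_{n,1}}=(-1)^{a_{1,1}}$ pattern cancels the second, so each of the $N$ terms contributes $+1$ (with weights $N-1$ and $1$ respectively), summing to $2(N-1)$.

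The main obstacle is bookkeeping rather than mathematical depth: one must keep the tensor-factor positions of the $N$ parties straight when squaring $P_{\mathbf{a_1}}$ and each $Q_{n,\mathbf{a_1}}$, and check that the coefficient $(N-1)$ in front of $P_{\mathbf{a_1}}^{\,2}$ is exactly what is needed to match the ``long'' correlator's coefficient while using only the single identity $\tilde{A}_{1,0}^{\,2}+\tilde{A}_{1,1}^{\,2}\leq 2\I$ for party~$1$. As a byproduct, the SOS being zero at maximal violation yields the relations
\begin{equation}
(-1)^{a_{1,1}}\,\tilde{A}_{1,1}\otimes\bigotimes_{n=2}^N A_{n,1}\ket{\psi} = \ket{\psi}, \qquad (-1)^{a_{1,1}+a_{n,1}}\,\tilde{A}_{1,0}\otimes A_{n,0}\ket{\psi} = \ket{\psi},
\end{equation}
which are the analogues of \eqref{SOSrel1} needed for the self-testing argument in Theorem~\ref{theorem1Nm}.
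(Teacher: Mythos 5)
Your proof is correct and certifies the same bound $2(N-1)$ by the same overall SOS strategy, but with a genuinely different choice of squared operators: the paper squares the differences $P_{a_{1,1}}=(-1)^{a_{1,1}}\tilde{A}_{1,1}-\bigotimes_{n=2}^N A_{n,1}$ and $Q_{a_{n,1},a_{1,1}}=(-1)^{a_{1,1}+a_{n,1}}\tilde{A}_{1,0}-A_{n,0}$ as in \eqref{SOS1N}--\eqref{SOS2_1}, whereas you square each full correlator minus the identity. Your version does close: the cross terms reproduce $-2\hat{\mathcal{B}}_{\mathbf{a_1}}$, the identity terms contribute $2(N-1)\I$, and the leftover squares satisfy $(N-1)\,\tilde{A}_{1,1}^2\otimes\bigotimes_{n=2}^N A_{n,1}^2+\sum_{n=2}^N\tilde{A}_{1,0}^2\otimes A_{n,0}^2\le (N-1)\left(\tilde{A}_{1,1}^2+\tilde{A}_{1,0}^2\right)\otimes\I\le 2(N-1)\,\I$, using $0\le A_{n,j}^2\le\I$ and $\tilde{A}_{1,0}^2+\tilde{A}_{1,1}^2=A_{1,0}^2+A_{1,1}^2\le 2\I$; relying on this summed bound rather than the individual bound $\tilde{A}_{i,j}^2\le\I$ invoked in the paper is actually the cleaner justification, since the latter need not hold for non-projective measurements. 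The attainment computation with \eqref{GHZObsN} and \eqref{GHZvecsN} is identical to the paper's. The one substantive difference is in what the vanishing of the squares yields at maximal violation: the paper's choice gives directly the relations \eqref{SOSrel1N}, $(-1)^{a_{1,1}}\tilde{A}_{1,1}\ket{\psi}=\bigotimes_{n=2}^N A_{n,1}\ket{\psi}$ and $(-1)^{a_{1,1}+a_{n,1}}\tilde{A}_{1,0}\ket{\psi}=A_{n,0}\ket{\psi}$, together with the unitarity $A_{n,j}^2=\I$ extracted from the expansion, and these are exactly what the self-testing argument of Theorem \ref{theorem1Nm} consumes; your choice instead gives eigenvalue-one relations for the full correlators, which are equivalent to \eqref{SOSrel1N} only once unitarity/projectivity of the observables on the relevant supports is separately established. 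For the Fact itself this is immaterial, but the byproduct you state at the end is not literally \eqref{SOSrel1N}, so reusing your SOS downstream would require that extra step.
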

\begin{proof}
The following proof is based on Ref. \cite{sarkar2023universal}. Let us first write the Bell operator corresponding to each of the Bell expressions $\mathcal{I}_l$ \eqref{BE1NN} as
%
\begin{equation}\label{BE1Nop}
\hat{\mathcal{B}}_{\mathbf{a_1}}=(-1)^{a_{1,1}} (N-1)\tilde{A}_{1,1}\otimes\bigotimes_{n=2}^N A_{n,1} +\sum_{n=2}^N(-1)^{a_{1,1}+a_{n,1}}\tilde{A}_{1,0}\otimes A_{n,0},
\end{equation}
Now, for each $\hat{\mathcal{B}}_{\mathbf{a_1}}$ we can find the following sum-of-squares (SOS) decomposition,
\begin{eqnarray}\label{SOS1N}
   2\left[2(N-1)\,\I-\hat{\mathcal{B}}_{\mathbf{a_1}}\right]\geq(N-1)P_{a_{1,1}}^2+\sum_{i=2}^NQ_{a_{n,1},a_{1,1}}^2, 
\end{eqnarray}
where 

\begin{equation}\label{SOS2_1}
    P_{a_{1,1}}=(-1)^{a_{1,1}}\tilde{A}_{1,1}-\bigotimes_{n=2}^N A_{n,1},\qquad
   Q_{a_{n,1},a_{1,1}}=(-1)^{a_{1,1}+a_{n,1}}\tilde{A}_{1,0}-A_{n,0}
\end{equation}
Notice that expanding the terms on the right-hand side of the above decomposition \eqref{SOS1N} gives us
\begin{eqnarray}\label{SOS11N}
 (N-1)\left(\tilde{A}_{1,1}^2+\tilde{A}_{1,0}^2\right)+(N-1)\bigotimes_{n=2}^N A_{n,1}^2+\sum_{i=2}^NA_{n,0}^2-2\ \mathcal{B}_{\mathbf{a_1}}\leqslant \  2\left[2(N-1)\ \I-\ \mathcal{B}_{\mathbf{a_1}}\right]
\end{eqnarray}
where to arrive at the last line we used the fact that $\tilde{A}_{i,j}^2\leqslant \I$. Thus, $2(N-1)$ is an upper bound on $\mathcal{B}_{\mathbf{a_1}}$ when using quantum theory. One can easily verify that the Bell inequalities \eqref{BE1N} attain the value $2(N-1)$ when the states shared among the parties are the GHZ-like state $\ket{\phi_{\mathbf{a_1}}}$ \eqref{GHZvecsN} and the observables given in Eq. (\ref{GHZObsN}). Thus, the quantum bound of the Bell inequalities \eqref{BE1NN} for any $\mathbf{a_1}$ is $2(N-1)$.
\end{proof}

An important observation from the above SOS decomposition is that when the maximal violation of the above Bell inequalities is attained, the right-hand side of Eq. \eqref{SOS2_1} is $0$. Thus, any state $\ket{\psi}$ and observables $A_{i,j}$ that attain the maximal violation of the Bell inequalities \eqref{BE1N} for any $\mathbf{a_1}$ is given by
    \begin{equation}\label{SOSrel1N}
          (-1)^{a_{1,1}}\tilde{A}_{1,1}\ket{\psi}= \bigotimes_{n=2}^N A_{n,1}\ket{\psi},\qquad(-1)^{a_{1,1}+a_{n,1}}\tilde{A}_{1,0}\ket{\psi}=A_{n,0}\ket{\psi}\quad (n=2,\ldots,N),
    \end{equation}
Furthermore, one can also obtain from Eq. \eqref{SOS11N} that the observables that attain the maximum violation of the Bell inequalities \eqref{BE1N} must be unitary, that is, $A_{i,j}^2=\I$. These relations will be particularly useful for self-testing.

Along with the above statistics, one also needs to observe when $a_{n,1}=0$ for all $n$ and $x_{1,1}=x_{2,1}=1,\ x_{n,1}=0 (n=3,\ldots,N)$ that
\begin{eqnarray}\label{extrastatN}
    \langle \tilde{A}_{1,0}\otimes\I\rangle=-1,\qquad\langle\I\otimes A_{n,1}\rangle=1.\qquad n=2,\ldots,N.
\end{eqnarray}

\subsection{Model-independent certification}
\begin{thm}\label{theorem1N}
Assume that the Bell inequality $\mathcal{B}_{0,\ldots,0}$ \eqref{BE1NN} is maximally violated at $t_1$. Furthermore, when the parties obtain the outcome $\mathbf{a_1}$ with inputs $x_{1,1}=x_{2,1}=0$ and $x_{n,1}=1(n=3,\ldots,N)$, then the Bell inequalities $\mathcal{B}_{\mathbf{a_1}}$ \eqref{BE1NN} for any $\mathbf{a_1}$ are maximally violated at $t_2$. Along with it, when the parties observe the outcomes $a_{n,1}=0$ for all $n$ and $x_{1,1}=x_{2,1}=1,\ x_{n,1}=0 (n=3,\ldots,N)$, then the condition \eqref{extrastatN} is also satisfied. The source $P$ prepares the state $\rho_{N}$ acts on $\bigotimes_{n=1,\ldots,N}\mathcal{H}_{A_n(t_1)}$, the interaction between the particles is represented by the unitary $V_N:\bigotimes_{n=1,\ldots,N}\mathcal{H}_{A_n(t_1)}\rightarrow\bigotimes_{n=1,\ldots,N}\mathcal{H}_{A_n(t_2)}$ and the local observables of all parties are given by $A_{n,i}\ (i=0,1)$ that acts on $\mathcal{H}_{A_n}$ respectively where $\mathcal{H}_{A_n}\equiv\mathcal{H}_{A_n(t_1)}\cup\mathcal{H}_{A_n(t_2)}$. Then, there exist local unitary transformations $U_{n(t_i)}:\mathcal{H}_{A_n(t_i)}\rightarrow(\mathbb{C}^2)_{A'_n(t_i)}\otimes\mathcal{H}_{A''_n(t_i)}$ for $i=1,2$ 
 such that:
\begin{enumerate}
    \item  The state $\rho_{N}$ is certified as
\begin{eqnarray}\label{A10N}
\bigotimes_nU_{n(t_1)}\ \rho_{N}\ \bigotimes_nU_{n(t_1)}^{\dagger}=\proj{\phi_{0,\ldots,0}}_{A'_1(t_1)\ldots A_N'(t_1)}\otimes\xi_{A_1''(t_1)\ldots A_N''(t_1)}.
\end{eqnarray}
\item The observables of Alice and Bob are certified as  
\begin{eqnarray}\label{mea1N}
&U_{1(t_i)}\overline{A}_{1,0(t_i)}\,U_{1}^{\dagger}=\left(\frac{X+Z}{\sqrt{2}}\right)_{A_1'(t_i)}\otimes\I_{A_1''(t_i)}, \qquad U_{1(t_i)}\overline{A}_{1,1}\,U_{1(t_i)}^{\dagger}=\left(\frac{X-Z}{\sqrt{2}}\right)_{A_1'(t_i)}\otimes\I_{A_1''(t_i)},\nonumber\\
&U_{n(t_i)}\overline{A}_{n,0(t_i)}\,U_{n(t_i)}^{\dagger}=Z_{A_n'(t_i)}\otimes\I_{A_n''(t_i)},\qquad U_{n(t_i)}\overline{A}_{n,1(t_i)}\,U_{n(t_i)}^{\dagger}=X_{A_n'(t_i)}\otimes\I_{A_n''(t_i)}
\end{eqnarray}
for any $n=1,\ldots,N$ and $i=1,2,\ j=0,1$ where $\overline{A}_{n,j(t_i)}=\Pi_{i}^nA_{n,j}\Pi_{i}^n$ such that $\Pi_{i}^n$ denotes the projection onto the Hilbert space $\mathcal{H}_{A_n(t_i)}$.

\item The unitary $V_N$ is certified as $\bigotimes_nU_{n(t_2)}\ V_N\ \bigotimes_nU_{n(t_1)}^{\dagger}=\mathcal{U_N}\otimes V_{\mathrm{aux}}$ where $\mathcal{U_N}$ is given in Eq. \eqref{Nuni} and $V_{\mathrm{aux}}$ is a unitary that maps  $\bigotimes_{n}\mathcal{H}_{A_n''(t_1)}$ to $\bigotimes_{n}\mathcal{H}_{A_n''(t_2)}$.
\end{enumerate}
\end{thm}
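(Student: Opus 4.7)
The plan is to mirror the two-party proof in Appendix~A step by step. The argument splits into two stages. In the first, I self-test the source state $\rho_N$, all $2N$ local observables, and each post-interaction state $\sigma_{\mathbf{a_1},\mathbf{x_1}}$ at time $t_2$. In the second, I use these certified ingredients, together with the extra conditions \eqref{extrastatN}, to pin down $V_N$ as $\mathcal{U_N}\otimes V_{\mathrm{aux}}$.

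For the self-testing stage, the SOS decomposition \eqref{SOS1N} and its operator consequences \eqref{SOSrel1N} are the key inputs. Maximal violation of $\mathcal{B}_{\mathbf{0}}$ at $t_1$ yields the anticommutation $\{\overline{A}_{n,0(t_1)},\overline{A}_{n,1(t_1)}\}=0$ on the local support of $\rho_N$ for every $n=1,\ldots,N$; here party~1 uses the combinations $\tilde{A}_{1,0},\tilde{A}_{1,1}$ while parties $n\geq 2$ use $A_{n,0},A_{n,1}$. Following \cite{sarkar2023universal}, this gives local unitaries $U_{n(t_1)}$ that bring each observable into the block form \eqref{mea1N} at $i=1$, after which the reference state $\ket{\phi_{0,\ldots,0}}$ is identified from the stabilizer relations \eqref{SOSrel1N} specialized to $\mathbf{a_1}=\mathbf{0}$. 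Exactly the same argument, run on $\hat{\mathcal{B}}_{\mathbf{a_1}}$ at $t_2$ for every $\mathbf{a_1}$, first yields anticommutation on each local support $\Pi^{(\mathbf{a_1})}_{A_n}$; invoking Lemma~2 of \cite{sarkar2023self} upgrades this to anticommutation on the union $\Pi^n_2=\cup_{\mathbf{a_1}}\Pi^{(\mathbf{a_1})}_{A_n}$, producing $U_{n(t_2)}$ and the certifications \eqref{mea1N} at $i=2$, together with $\sigma_{\mathbf{a_1},\mathbf{x_1}}\mapsto\proj{\phi_{\mathbf{a_1}}}\otimes\xi^{\mathbf{a_1}}$. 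Because the measurements at $t_1$ are projective and self-tested, the post-measurement states $\rho'_{\mathbf{a_1},\mathbf{x_1}}$ become products of the appropriate reference eigenvectors with the auxiliary state $\xi$.

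For the unitary certification, I expand $V_N$ as in \eqref{Vgen} in the product basis adapted to the inputs used at $t_1$: $\{\ket{\overline{i}_1}\}$ for party 1, $\{\ket{j_2}\}$ for party 2, and $\{\ket{k_n^x}\}$ for $n\geq 3$, tensored with block operators on the auxiliary systems and the computational basis on the $t_2$ side. Purifying both sides of $V_N\otimes\I_E\ket{\rho'_{\mathbf{a_1},\mathbf{x_1}}}=\ket{\sigma_{\mathbf{a_1},\mathbf{x_1}}}$ and contracting with each input-basis bra, as in the two-party proof, the identifications of $\rho'$ and $\sigma$ together with the full-rankness of the reduced $\xi$ force every block with a fixed $\mathbf{a_1}$ to be proportional (through $\braket{\mathbf{i}}{\phi_{\mathbf{a_1}}}$) to a single auxiliary operator $V_{\mathbf{a_1}}$. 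Collecting all inputs $\mathbf{a_1}$ yields the decomposition $V_N=\sum_{\mathbf{a_1}}\ket{\phi_{\mathbf{a_1}}}\bra{\overline{a}_{1,1}a_{2,1}a_{3,1}^x\cdots a_{N,1}^x}\otimes V_{\mathbf{a_1}}$.

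The main obstacle, and the step that required the extra condition in the two-party proof, is showing that $V_{\mathbf{a_1}}$ is the \emph{same} operator $V_{\mathrm{aux}}$ for every $\mathbf{a_1}$. I would consider the post-interaction state $\sigma_{\mathbf{0},1,1,0,\ldots,0}$, substitute the certified form of $\rho'_{\mathbf{0},1,1,0,\ldots,0}$ and the block decomposition of $V_N$, and then use the $N$ saturated Cauchy--Schwarz conditions \eqref{extrastatN} as eigenvector relations on $\ket{\sigma_{\mathbf{0},1,1,0,\ldots,0}}$: $\tilde{A}_{1,0}\otimes\I$ with eigenvalue $-1$ flips $a_{1,1}$, and for each $n=2,\ldots,N$, $\I\otimes A_{n,1}$ with eigenvalue $+1$ flips $a_{n,1}$. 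Each such flip produces an equality $V_{\mathbf{a_1}}=V_{\mathbf{a_1}\oplus e_n}$, and chaining the $N$ relations connects every $\mathbf{a_1}$ to $\mathbf{0}$, forcing all $2^N$ block operators to coincide with a common $V_{\mathrm{aux}}$. The main bookkeeping delicacy will be keeping track of which basis each party's label sits in when evaluating $\braket{\mathbf{i}}{\phi_{\mathbf{a_1}}}$, and verifying that the full-rankness argument for $\xi$ survives on the multipartite auxiliary Hilbert space $\bigotimes_n\mathcal{H}_{A_n''(t_1)}$.
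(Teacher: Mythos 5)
Your proposal follows essentially the same route as the paper's Appendix~B proof: the same SOS/anticommutation self-testing of $\rho_N$, the observables and the post-interaction states (including Lemma~2 of \cite{sarkar2023self} for the union of supports), the same block expansion of $V_N$ in the basis adapted to the $t_1$ inputs, and the same use of \eqref{extrastatN} to equate the blocks $V_{\mathbf{a_1}}$. The only inaccuracy is your claim that the condition $\langle\tilde{A}_{1,0}\otimes\I\rangle=-1$ ``flips $a_{1,1}$'': since $\tilde{A}_{1,0}$ is certified as $Z$ on party 1's qubit, it maps $\ket{\phi_{\mathbf{a_1}}}$ to $-\ket{\phi_{\mathbf{a_1}^{\perp}}}$, so the relation it yields is the global flip $V_{a_{1,1},\ldots,a_{N,1}}=V_{a_{1,1}^{\perp},\ldots,a_{N,1}^{\perp}}$ rather than $V_{\mathbf{a_1}}=V_{\mathbf{a_1}\oplus e_1}$; combined with the single-site flips $V_{\mathbf{a_1}}=V_{\mathbf{a_1}\oplus e_n}$ ($n=2,\ldots,N$) coming from $\I\otimes A_{n,1}$, these relations still connect all of $\{0,1\}^N$, so your conclusion that all $2^N$ blocks equal a common $V_{\mathrm{aux}}$ is unaffected.
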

\begin{proof}
The proof is divided into two major parts. In the first part, we certify the local observables of Alice and Bob along with the states prepared by the source $P$ and the post-interaction states. Then in the second part, using the certified states and observables we certify the unitary $V_N$.
    \begin{center}
        \textbf{Local observables and the state prepared by the source $P$}
    \end{center}
Consider the $N-$party generalising of the scenario presented in Fig. 1 of the manuscript at time $t_1$, that is, before the $N$ systems interact. Now, the quantum state $\rho_{N}$ acts on $\bigotimes_{n=1,\ldots,N}\mathcal{H}_{A_n(t_1)}$ and thus we consider the projection of the observables $A_{n,j}$ onto the support of the local states as $\overline{A}_{n,j(t_1)}$ where
\begin{eqnarray}
    \overline{A}_{n,j(t_1)}=\Pi_{1}^nA_{n,j}\Pi_{1}^n\qquad (j=0,1)
\end{eqnarray}
such that $\Pi_{1}^n$ denote the projectors onto $\mathcal{H}_{A_n(t_1)}$.

Following \cite{sarkar2023universal} and $a_{n,1}=0$ for any $n$, we obtain from Eqs. \eqref{SOSrel1N} that
\begin{eqnarray}
    \left\{\overline{{A}}_{0,1},\overline{{A}}_{1,1}\right\}=0,\qquad \left\{\overline{{B}}_{0,1},\overline{{B}}_{1,1}\right\}=0.
\end{eqnarray}
Consequently, there exists a unitary transformation $U_{n(t_1)}:\mathcal{H}_{n(t_1)}\rightarrow(\mathbb{C}^2)_{n'(t_1)}\otimes\mathcal{H}_{n''(t_1)}$ to obtain \eqref{mea1N} for $i=1$. Utilising now the relation \eqref{SOSrel1N} for $a_1=b_1=0$ and following exactly from \cite{sarkar2023universal}, we can self-test the state $\rho_N$ as
\begin{eqnarray}
   \bigotimes_nU_{n(t_1)}\rho_{N} \bigotimes_nU_{n(t_1)}^{\dagger}=\proj{\phi_{0,\ldots,0}}_{A_1'(t_1)\ldots A'_N(t_1)}\otimes\xi_{A_1''(t_1)\ldots A''_N(t_1)}.
\end{eqnarray}

Now the post-measurement states, $\rho'_{\mathbf{a_1},\mathbf{x_1}}$, of all parties undergo a joint unitary transformation $V_N:\bigotimes_{n}\mathcal{H}_{A_n(t_1)}\rightarrow\bigotimes_{n}\mathcal{H}_{A_n(t_2)}$ to get the post-interaction states, denoted by $\sigma_{\mathbf{a_1},\mathbf{x_1}}$.
As the measurements and states at time $t_1$ are self-tested, 
the post-measurement states, $\rho'_{\mathbf{a_1},\mathbf{x_1}}$ are given by
\begin{eqnarray}\label{pre-int stateN}
\bigotimes_nU_{n(t_1)}\rho'_{\mathbf{a_1},\mathbf{x_1}}\bigotimes_nU_{n(t_1)}^{\dagger}=\bigotimes_n\proj{e_{a_{n,1},x_{n,1}}}_{A'_n(t_1)}\otimes\xi_{A_1''(t_1)\ldots A''_N(t_1)}
\end{eqnarray}
where $\ket{e_{a_{n,1},x_{n,1}}}$ is the eigenvector corresponding to the $a_{n,1}$ outcome of $x_{n,1}$ target measurement.

Consider again the generalised scenario of Fig. 1 of the manuscript at time $t_2$, that is, after the interaction.
The post-interaction states $\sigma_{\mathbf{a_1},0,0,1,\ldots,1}$ maximally violate the Bell inequalities \eqref{BE1N} for any $\mathbf{a_1}$. Consequently, they are also certified from the maximal violation of the Bell inequalities \eqref{BE1NN}. For this purpose, we again utilise the relation \eqref{SOSrel1} for any $\mathbf{a_1}$ and following exactly from \cite{sarkar2023universal}, we first obtain that
\begin{eqnarray}
    \left\{\mathbb{A}_{n,0}^{(\mathbf{a_1})},\mathbb{A}_{n,1}^{(\mathbf{a_1})}\right\}=0\qquad \forall \mathbf{a_1}
\end{eqnarray}
where $\mathbb{A}_{n,j}^{(\mathbf{a_1})}=\Pi_{n}^{(\mathbf{a_1})}A_{n,j}\Pi_{n}^{(\mathbf{a_1})}$  for $j=0,1$ such that $\Pi_{n}^{(\mathbf{a_1})}$ is the projection onto the local supports of the states. 
Now, utilizing Lemma 2 of \cite{sarkar2023self}, we obtain that the local observables acting on the support of all states, the corresponding projector given by $\Pi_{2}^n=\cup_{\mathbf{a_1}}\Pi_{n}^{(\mathbf{a_1})}$, also anti-commute 
\begin{eqnarray}
    \{\overline{A}_{n,0(t_2)},\overline{A}_{n,1(t_2)}\}=0
\end{eqnarray}
such that $\overline{A}_{n,j(t_2)}=\Pi_{2}^nA_{n,j}\Pi_{2}^n$. Consequently, one can straightaway conclude that there exist unitaries $U_{n(t_2)}:\mathcal{H}_{A_n(t_2)}\rightarrow(\mathbb{C}^2)_{A_n'(t_2)}\otimes\mathcal{H}_{A_n''(t_2)}$ to obtain \eqref{mea1N} for $i=2$. Furthermore from \cite{sarkar2023universal}, the post-interaction states are certified using the relations \eqref{SOSrel1N} for all $\mathbf{a_1}$ using the certified observables \eqref{mea1N} at time $t_2$ as
\begin{eqnarray}\label{post-int stateN}
   \bigotimes_nU_{n(t_2)}\sigma_{\mathbf{a_1},0,0,1,\ldots,1}\bigotimes_nU_{n(t_2)}^{\dagger}=\proj{\phi_{\mathbf{a_1}}}_{A_1'(t_2)\ldots,A_N'(t_2)}\otimes\xi^{\mathbf{a_1}}_{A''_1(t_2)\ldots A''_n(t_2)}
\end{eqnarray}
where $\ket{\phi_{\mathbf{a_1}}}$ are given in \eqref{GHZvecsN}. Let us now utilize the certified pre-interaction states \eqref{pre-int stateN} and post-interaction ones \eqref{post-int stateN} to self-test the unitary $V_N$.
\newpage
   \begin{center}
        \textbf{The unitary $V_N$}
    \end{center}
As shown above, the Hilbert spaces $\mathcal{H}_{A_n(t_i)}$ decompose as $(\mathbb{C}^2)_{A_n'(t_i)}\otimes\mathcal{H}_{A_n''(t_i)}$ for any $n,i$. Thus, we can express $V_N$ as

\begin{eqnarray}\label{VgenN}
   \bigotimes_nU_{n(t_2)}\ V_N\  \bigotimes_nU_{n(t_2)}^{\dagger}=\sum_{i_1,\ldots,i_N,j_1,\ldots,j_N=0,1}\ket{\overline{i}_1i_2i_3^x\ldots i_{N}^x}\bra{\overline{j}_{1}j_2j_3^x\ldots j_{N}^x}\otimes V_{i_1,\ldots,i_N,j_1,\ldots,j_N}
\end{eqnarray}
where $\ket{\overline{i}}$ are the eigenvectors of $(X+Z)/\sqrt{2}$ as given in \eqref{X+Z}. Furthermore, the state $\ket{\overline{i_1}\ldots i_{N}^x}\in\bigotimes_{n}\mathbb{C}^2_{A_n'(t_1)}$ and $\ket{\overline{j_1}\ldots j_{N}^x}\in\bigotimes_{n}\mathbb{C}^2_{A_n'(t_2)}$ along with $V_{i_1,\ldots,i_N,j,\ldots,j_N}$ maps  $\bigotimes_{n}\mathcal{H}_{A_n''(t_1)}$ to $\bigotimes_{n}\mathcal{H}_{A_n''(t_2)}$.

Now, suppose when $x_{n,1}=a_{n,1}=0$ for any $n$, then as described above the Bell inequality $\mathcal{B}_{0,\ldots,0}$ is maximally violated. Thus from the relation $V_N\rho'_{\mathbf{a_1},\mathbf{x_1}}V_N^{\dagger}=\sigma_{\mathbf{a_1},\mathbf{x_1}}$, we can purify the states on the left and right sides by adding ancillary systems $E$ to obtain that 
\begin{eqnarray}
V_N\otimes\I_E\ket{\rho'_{0,\ldots,0}}_{A_1(t_1)\ldots A_N(t_1)E}=\ket{\sigma_{0,\ldots,0}}_{A_1(t_2)\ldots A_N(t_2)E}
\end{eqnarray}
such that $\Tr_E (\proj{\rho'_{0,\ldots,0}})=\rho'_{0,\ldots,0}$ and $\Tr_E (\proj{\sigma_{0,\ldots,0}})=\sigma_{0,\ldots,0}$.
Now, using the certified states $\rho'_{0,\ldots,0}$ and $\sigma_{0,\ldots,0}$ from Eqs. \eqref{pre-int stateN} and \eqref{post-int stateN} respectively and then expanding $V_N$ using \eqref{VgenN}, we obtain
\begin{eqnarray}\label{V00}
    \sum_{i_1,\ldots,i_N=0,1}\ket{\overline{i}_1i_2i_3^x\ldots i_{N}^x} \left(V_{i_1,\ldots,i_N,0,\ldots,0}\otimes\I_E\ket{\xi}\right)=\ket{\phi_{0\ldots0}}\otimes\ket{\xi^{0\ldots0}}.
\end{eqnarray}
Multiplying by $\bra{\overline{i_1}i_2i_3^x\ldots i_{N}^x}$ for all $i_1,\ldots,i_N$ on both sides of the above formula gives us the following four conditions
\begin{eqnarray}
2^{\frac{N-1}{2}}\ V_{0,1,i_3,\ldots,i_N,0,\ldots,0}\otimes\I_E\ket{\xi}&=&(-1)^{i_3+\ldots+i_N}\sin(\pi/8)\ket{\xi^{0,0}},\nonumber\\ 2^{\frac{N-1}{2}}\ V_{1,1,i_3,\ldots,i_N,0,\ldots,0}\otimes\I_E\ket{\xi}&=&(-1)^{i_3+\ldots+i_N}\cos(\pi/8)\ket{\xi^{0,0}},\nonumber\\
2^{\frac{N-1}{2}}\ V_{0,0,i_3,\ldots,i_N,0,\ldots,0}\otimes\I_E\ket{\xi}&=&\cos(\pi/8)\ket{\xi^{0,0}},\qquad2^{\frac{N-1}{2}}\ V_{1,0,i_3,\ldots,i_N,0,\ldots,0}\otimes\I_E\ket{\xi}=-\sin(\pi/8)\ket{\xi^{0,0}}
\end{eqnarray}
for all $i_3,\ldots,i_N$.
Thus, from the above conditions we can conclude that
\begin{eqnarray}
    \frac{(-1)^{i_3+\ldots+i_N}}{\sin(\pi/8)}V_{0,1,i_3,\ldots,i_N,0,\ldots,0}\otimes\I_E\ket{\xi}&=&\frac{(-1)^{i_3+\ldots+i_N}}{\cos(\pi/8)}V_{1,1,i_3,\ldots,i_N,0,\ldots,0}\otimes\I_E\ket{\xi}\nonumber\\&=& \frac{1}{\cos(\pi/8)}V_{0,0,i_3,\ldots,i_N,0,\ldots,0}\otimes\I_E\ket{\xi}=\frac{-1}{\sin(\pi/8)}V_{1,0,i_3,\ldots,i_N,0,\ldots,0}\otimes\I_E\ket{\xi}.\qquad
\end{eqnarray}
Now, without loss of generality one can consider that $\xi$ is full-rank as the unitary $V$ can be characterised only on the support of the states $\rho'$. Thus, from the above formula taking partial trace over $E$ gives us
\begin{eqnarray}
     \frac{2^{\frac{N-1}{2}}(-1)^{i_3+\ldots+i_N}}{\sin(\pi/8)}V_{0,1,i_3,\ldots,i_N,0,\ldots,0}&=&\frac{2^{\frac{N-1}{2}}(-1)^{i_3+\ldots+i_N}}{\cos(\pi/8)}V_{1,1,i_3,\ldots,i_N,0,\ldots,0}\nonumber\\&=& \frac{2^{\frac{N-1}{2}}}{\cos(\pi/8)}V_{0,0,i_3,\ldots,i_N,0,\ldots,0}=\frac{-2^{\frac{N-1}{2}}}{\sin(\pi/8)}V_{1,0,i_3,\ldots,i_N,0,\ldots,0}=V_{0,\ldots,0}.
\end{eqnarray}

Similarly, considering the other outputs $a_{n,1}$ and $x_{1,1}=x_{2,1}=0$ and $x_{n,1}=1(n=3,\ldots,N)$ for any $n$ and recalling that 
\begin{eqnarray}\label{36N}
V_N\otimes\I_E\ket{\rho'_{\mathbf{a_1},0,0,1\ldots,1}}_{A(t_1)B(t_1)E}=\ket{\sigma_{\mathbf{a_1},0,0,1\ldots,1}}_{A(t_2)B(t_2)E}
\end{eqnarray}
where states $\ket{\rho'_{\mathbf{a_1},0,0,1\ldots,1}}$ and $\ket{\sigma_{\mathbf{a_1},0,0,1\ldots,1}}$ are certified in Eqs. \eqref{pre-int stateN} and \eqref{post-int stateN} respectively. Consequently, from the general expression of the unitary $V_N$ \eqref{VgenN} and the above condition \eqref{36N}, we obtain that
\begin{eqnarray}\label{VabN}
     \sum_{i_1,\ldots,i_N=0,1}\ket{\overline{i}_1i_2i_3^x\ldots i_{N}^x}\left(V_{i_1,\ldots,i_N,\mathbf{a_1}}\otimes\I_E\ket{\xi}\right)=\ket{\phi_{\mathbf{a_1}}}\otimes\ket{\xi^{\mathbf{a_1}}}.
\end{eqnarray}
Multiplying by $\bra{\overline{i_1}i_2i_3^x\ldots i_{N}^x}$ for all $i_1,\ldots,i_N$ on both sides of the above formula gives us the following conditions for any $\mathbf{a_1}$
\begin{eqnarray}
V_{i_1,\ldots,i_N,\mathbf{a_1}}\otimes\I_E\ket{\xi}&=&\bra{\overline{i}_1i_2i_3^x\ldots i_{N}^x}\phi_{\mathbf{a_1}}\rangle\ket{\xi^{\mathbf{a_1}}},\qquad i_1,\ldots,i_N=0,1
\end{eqnarray}
Thus, from the above conditions and again using the fact that $\xi$ is full-rank, we can conclude that
\begin{eqnarray}
      \frac{V_{i_1,\ldots,i_N,\mathbf{a_1}}}{\bra{\overline{i}_1i_2i_3^x\ldots i_{N}^x}\phi_{\mathbf{a_1}}\rangle}=V_{\mathbf{a_1}}\qquad i_1,\ldots,i_N=0,1.
\end{eqnarray}
Now, using the above condition in Eq. \eqref{Vgen} gives us 
\begin{eqnarray}\label{Vgen1N}
  \bigotimes_nU_{n(t_2)}\ V_N\  \bigotimes_nU_{n(t_2)}^{\dagger}&=&\sum_{i_1,\ldots,i_N,\mathbf{a_1}=0,1}\ket{\overline{i}_1i_2i_3^x\ldots i_{N}^x}\!\bra{\overline{a}_{1,1}a_{2,1}a_{3,1}^x\ldots a_{N,1}^x}\bra{\overline{i}_1i_2i_3^x\ldots i_{N}^x}\phi_{\mathbf{a_1}}\rangle\otimes V_{\mathbf{a_1}}\nonumber\\
  &=&\sum_{i_1,\ldots,i_N,\mathbf{a_1}=0,1}\ket{\overline{i}_1i_2i_3^x\ldots i_{N}^x}\!\bra{\overline{i}_1i_2i_3^x\ldots i_{N}^x}\phi_{\mathbf{a_1}}\rangle\!\bra{\overline{a}_{1,1}a_{2,1}a_{3,1}^x\ldots a_{N,1}^x}\otimes V_{\mathbf{a_1}}\nonumber\\
   &=&\sum_{\mathbf{a_1}=0,1}\ket{\phi_{\mathbf{a_1}}}\!\bra{\overline{a}_{1,1}a_{2,1}a_{3,1}^x\ldots a_{N,1}^x}\otimes V_{\mathbf{a_1}}
\end{eqnarray}
where we used the fact that$\sum_{i_1,\ldots,i_N}\ket{\overline{i}_1i_2i_3^x\ldots i_{N}^x}\!\bra{\overline{i}_1i_2i_3^x\ldots i_{N}^x}=\I$.


Now, we finally utilize the fact that when $a_{n,1}=0$ for all $n$ and $x_{1,1}=x_{2,1}=1,\ x_{n,1}=0 (n=3,\ldots,N)$, then one satisfies the relation \eqref{extrastatN}. Let us exploit this relation to show that all $V_{\mathbf{a_1}}$ are equal. The state $\sigma_{0,\ldots,0,1,1,0\ldots,0}$ is given by
\begin{eqnarray}
 \bigotimes_nU_{n(t_2)}  \ket{\sigma_{0,\ldots,0,1,1,0\ldots,0}}= \left( \bigotimes_nU_{n(t_2)} V_N\otimes\I_E  \bigotimes_nU_{n(t_2)}^{\dagger}\right) \bigotimes_nU_{A_n(t_1)}\ket{\rho'_{0,\ldots,0,1,1,0\ldots,0}}.
\end{eqnarray}
Using the fact that $\rho'_{0,\ldots,0,1,1,0\ldots,0}$ is certified as in \eqref{pre-int stateN} and the form of $V_N$ from \eqref{Vgen1N}, we obtain from the above formula that
\begin{eqnarray}\label{83}
    \bigotimes_nU_{n(t_2)}  \ket{\sigma_{0,\ldots,0,1,1,0\ldots,0}}=\frac{1}{2^\frac{N}{2}}\left(\sum_{\mathbf{a_1}=0,1}\ket{\phi_{\mathbf{a_1}}}\otimes V_{\mathbf{a_1}}\right)\otimes\I_E\ket{\xi}.
\end{eqnarray}
Consider now the relation $\langle \I\otimes A_{n,1}\rangle=1$, which using Cauchy-Schwarz inequality can be written as 
\begin{eqnarray}\label{84}
    \I\otimes A_{n,1}\ket{\sigma_{0,\ldots,0,1,1,0\ldots,0}}=\ket{\sigma_{0,\ldots,0,1,1,0\ldots,0}}.
\end{eqnarray}
Notice that $\sigma_{0,\ldots,0,1,1,0\ldots,0}$ acts on $\bigotimes_n\mathcal{H}_{A_n(t_2)}$ and $A_{n,1}$ is certified on it as in Eq. \eqref{mea1N}. Let us now observe that for $n=2,\ldots,N$
\begin{eqnarray}
   \I\otimes X_{A_n}\ket{\phi_{l_1\ldots,l_n,\ldots, l_N}}=\ket{\phi_{l_1\ldots,l^{\perp}_n,\ldots, l_N}}
\end{eqnarray}
where $l^{\perp}_n=1-l_n$.
Plugging in the state \eqref{83} into the above condition \eqref{84} for any $n$, we obtain that
\begin{eqnarray}
 \left(\sum_{a_{1,1},\ldots a_{N,1}=0,1}\ket{\phi_{a_{1,1}\ldots,a^{\perp}_{n,1},\ldots, a_{N,1}}}\otimes V_{a_{1,1},\ldots,a_{n,1},\ldots a_{N,1}}\right)\otimes\I_E\ket{\xi}= \left(\sum_{a_{1,1},\ldots a_{N,1}=0,1}\ket{\phi_{a_{1,1},\ldots,a_{n,1},\ldots a_{N,1}}}\otimes V_{a_{1,1},\ldots,a_{n,1},\ldots, a_{N,1}}\right)\otimes\I_E\ket{\xi}\nonumber\\
\end{eqnarray}
which by utilising the fact that $\xi$ is full-rank allows us to conclude that
\begin{eqnarray}\label{Vgen2N}
   V_{a_{1,1},\ldots,a^{\perp}_{n,1},\ldots a_{N,1}}=V_{a_{1,1},\ldots,a_{n,1},\ldots a_{N,1}}\qquad n=2,\ldots,N.
\end{eqnarray}
Now, utilising the relation $\langle \tilde{A}_{1,0}\otimes \I\rangle=-1$, which again using Cauchy-Schwarz inequality can be written as 
\begin{eqnarray}\label{90}
\tilde{A}_{1,0}\otimes\I\ket{\sigma_{0,\ldots,0,1,1,0\ldots,0}}=-\ket{\sigma_{0,\ldots,0,1,1,0\ldots,0}}
\end{eqnarray}
Notice now that
\begin{eqnarray}
  Z_{A_n}\otimes\I\ket{\phi_{l_1,l_2\ldots, l_N}}=-\ket{\phi_{l^{\perp}_1,l^{\perp}_2,\ldots, l^{\perp}_N}}
\end{eqnarray}
using which we can obtain from Eqs. \eqref{90} and \eqref{83}  that
\begin{eqnarray}
  \left(\sum_{a_{1,1},\ldots a_{N,1}=0,1}\ket{\phi_{a_{1,1}^{\perp},a^{\perp}_{2,1},\ldots, a_{N,1}^{\perp}}}\otimes V_{a_{1,1},a_{2,1},\ldots a_{N,1}}\right)\otimes\I_E\ket{\xi}= \left(\sum_{a_{1,1},\ldots a_{N,1}=0,1}\ket{\phi_{a_{1,1},\ldots,a_{n,1},\ldots a_{N,1}}}\otimes V_{a_{1,1},a_{2,1},\ldots, a_{N,1}}\right)\otimes\I_E\ket{\xi}.\nonumber\\
\end{eqnarray}
This allows to conclude that
\begin{eqnarray}\label{Vgen3N}
    V_{a_{1,1}^{\perp},a_{2,1}^{\perp},\ldots a_{N,1}^{\perp}}=V_{a_{1,1},a_{2,1},\ldots a_{N,1}}.
\end{eqnarray}
It is now simple to observe from Eqs. \eqref{Vgen2N} and \eqref{Vgen3N} that  $V_{\mathbf{a_1}}=V_{\mathrm{aux}}$ for any $\mathbf{a_1}$. Consequently, one can straightforwardly conclude from \eqref{Vgen1N} that
\begin{eqnarray}
 \bigotimes_nU_{n(t_2)}\ V_N\  \bigotimes_nU_{n(t_2)}^{\dagger}=\sum_{\mathbf{a_1}=0,1}\ket{\phi_{\mathbf{a_1}}}\!\bra{\overline{a}_{1,1}a_{2,1}a_{3,1}^x\ldots a_{N,1}^x}\otimes V_{\mathrm{aux}}.
\end{eqnarray}
\end{proof}

\end{document}